 \newtheorem{thm}{Theorem}[section]
 \newtheorem{cor}[thm]{Corollary}
 \newtheorem{prop}[thm]{Proposition}
 \theoremstyle{definition}
 \newtheorem{defn}[thm]{Definition}
 \theoremstyle{remark}
 \newtheorem{rem}[thm]{Remark}
 \newtheorem{ex}[thm]{Example}
 \numberwithin{equation}{section}
\begin{document}



\title[{A description of pseudo-bosons in terms of... }]{A description of pseudo-bosons in terms of  nilpotent Lie algebras }

\author[F. Bagarello]{Fabio Bagarello}
\address{Scuola Politecnica dell'Universit\'a \endgraf
Dipartimento di Energia, Ingegneria dell'Informazione e modelli Matematici \endgraf
Universit\'a degli Studi di Palermo\endgraf
Viale delle Scienze, I-90128\endgraf
Palermo, Italy\endgraf
and\endgraf
Istituto Nazionale di Fisica Nucleare \endgraf
via Cinthia, Montesantangelo\endgraf
Napoli, Italy\endgraf}
\email{fabio.bagarello@unipa.it}

\author[F.G. Russo]{Francesco G. Russo}
\address{Department of Mathematics and Applied Mathematics\endgraf
 University of Cape Town \endgraf
 Private Bag X1, Rondebosch 7701, Cape Town, South Africa\endgraf}
\email{francescog.russo@yahoo.com}

\subjclass{17Bxx; 37J15; 70G65; 81R30}

\date{\today}


\begin{abstract}
\noindent We show how the one-mode pseudo-bosonic ladder operators provide concrete examples of  nilpotent Lie algebras of dimension five. It is the first time that  an algebraic-geometric structure of this kind is observed in the context of pseudo-bosonic operators. Indeed we don't find the well known Heisenberg algebras, which are involved in several quantum dynamical systems,  but different Lie algebras which may be decomposed in the sum of two abelian Lie algebras in a prescribed way. We introduce the notion of semidirect sum (of Lie algebras) for this scope and find that it describes  very well  the behaviour of pseudo-bosonic operators in many quantum models.
\end{abstract}

\subjclass[2010]{Primary: 47L60, 17B30; Secondary: 17B60, 46K10}
\keywords{Pseudo-bosonic operators, Hilbert space, Schur multiplier, Lie algebras, Swanson model}
\date{\today}

\maketitle

\section{Introduction}

There are various fields of geometry and physics which are connected since a long time and the notions of Heisenberg algebra and of Heisenberg group appear often in the description of some dynamical systems in which the use of the quantum mechanics is significant. The classical reference of Messiah \cite{messiah} shows the importance of the overlaps  of Functional analysis and the theory of Lie algebras in  quantum mechanics.   Heisenberg algebras (and Heisenberg groups) are often involved in several  aspects of theoretical physics, for instance in any situation in which symmetries are present. For this reason, it is useful to formalize the well known notion of nilpotence in an algebraic structure and adapt it to physical contexts (see \cite{snobl}).

Heisenberg algebras are a special  class of finite dimensional nilpotent Lie algebras, which have been historically important in theoretical physics. On the other hand, they are not alone and the classification of nilpotent Lie algebras of finite dimension (see \cite{anch, degraaf, gong, morozov, turk}) shows that  it is possible to find several other Lie algebras of small dimension with significant applications in physics (see again \cite{snobl}). The classifications by Beck and Kolman \cite{beckekolman}  of 6-dimensional nilpotent Lie
algebras over fields of characteristics zero overlap a series of important contributions, begun with Anchoch\'ea-Bermudez \cite{anch},  Morozov \cite{morozov}, Turkowski \cite{turk} and Gong \cite{gong}. These works  are  improved by de Graaf  \cite{degraaf} to arbitrary fields of arbitrary characteristic. We will use their classifications in order to describe mathematical models in which pseudo-bosonic operators (see Sections 3 and 4) are involved. In Section 2 we give some useful definitions and notions of theoretical nature on Lie algebras and their decompositions. These notions can be found in \cite{goze,knapp} and are recently used in \cite{nr1, nr2, nr3}. Then we present the essential elements of the theory of the pseudo-bosonic operators as developed in \cite{baginbagbook} and references therein, where we observe that it is possible to introduce a Lie bracket, and so the structure of Lie algebra, on functional spaces of pseudo-bosonic operators.
The relations between functional analysis, operator algebras and Lie algebra structures of bounded operators are recently discussed in \cite{beltita}, which is an useful reference on the connections of these two subjects. But it is important to stress that pseudo-bosonic operators are unbounded, so that their treatment does not fit in the framework considered in \cite{beltita}. For this reason, we devote  Sections 3 and 4  to explore the above mentioned connection in the context of unbounded, pseudo-bosonic operators, so we have all that we need for the proofs of Section 5, where the main results are placed. Finally, we describe a new perspective in terms of Schur multipliers, providing a computation for a specific Lie algebra which is involved in our main results. Notations and terminology are standard and follow the main references.

\section{Preliminaries of geometric nature on Lie algebras}

For future convenience, we introduce some of the notions which will play a role in the paper.

\begin{defn}\label{semidirect-def}
A Lie algebra $\mathfrak{l}$ is the {\em semidirect sum} of two of its Lie subalgebras $\mathfrak{a}$ and $\mathfrak{b}$ if  the following conditions are satisfied:
\begin{itemize}
\item[(i)]$\mathfrak{a}$  is an ideal of $\mathfrak{l}$,
\item[(ii)] $\mathfrak{l}= \mathfrak{a} + \mathfrak{b}$,
\item[(iii)] $\mathfrak{a} \cap \mathfrak{b}=0$.
\end{itemize}
In this situation, $\mathfrak{a}$ and $\mathfrak{b}$ are called \textit{factors} of $\mathfrak{l}$.
\end{defn}

One of the first examples of a Lie algebra satisfying the conditions of Definition \ref{semidirect-def} is offered by the direct sum $\mathfrak{a} \oplus \mathfrak{a} $ of two copies of an abelian Lie algebra $\mathfrak{a}$.  More generally, we may consider an abelian Lie algebra $\mathfrak{b}$ different from $\mathfrak{a}$ and again the direct sum $\mathfrak{a} \oplus \mathfrak{b} $ satisfies the conditions of Definition \ref{semidirect-def}. One can argue with infinite families of abelian Lie algebras, not restricting  the case at  two, and in  \cite{knapp} one can find details of the following example.

\begin{ex} \label{directproducts} Given an abelian Lie algebra $\mathfrak{a}_n$ and  $n \ge 1$, the direct sum $$\mathfrak{l}=\mathfrak{a}_1 \oplus \mathfrak{a}_2 \oplus \ldots \oplus \mathfrak{a}_n \oplus \mathfrak{a}_{n+1} \oplus \ldots $$
 of countably many abelian Lie algebras $\mathfrak{a}_n$  is a Lie algebra satisfying  Definition \ref{semidirect-def}. In fact, if $\mathfrak{a}=\mathfrak{a}_1$ and $\mathfrak{b}=\mathfrak{a}_2 \oplus \ldots \oplus \mathfrak{a}_n \oplus \mathfrak{a}_{n+1} \oplus  \ldots$, then we may easily check that (i), (ii) and (iii) of Definition \ref{semidirect-def} are satisfied. In the present example, we have the additional condition  $[\mathfrak{b},\mathfrak{l}] \subseteq \mathfrak{b}$, that is,  $\mathfrak{b}$ is an ideal of  $\mathfrak{l}$ that is not required in Definition \ref{semidirect-def}. The same conclusions can also be deduced, if $\mathfrak{l}$ is constructed as above, but $\mathfrak{a}_n$ is not abelian (see \cite{goze, knapp}).
\end{ex}

Indeed all direct sums of Lie algebras are particular cases of the notion of semidirect sum, once an additional condition is satisfied together with (i), (ii) and (iii) of Definition \ref{semidirect-def}.

\begin{prop}\label{fact}
Assume $\mathfrak{l}=\mathfrak{a} + \mathfrak{b}$ is a Lie algebra satisfying  Definition \ref{semidirect-def}. If we have in addition that $\mathfrak{b}$ is an ideal of $\mathfrak{l}$, then $\mathfrak{l}=\mathfrak{a} \oplus \mathfrak{b}$.
\end{prop}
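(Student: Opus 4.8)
The plan is to reduce the claimed Lie-algebra direct sum decomposition to its two defining features: that $\mathfrak{l}$ is the \emph{vector space} direct sum of $\mathfrak{a}$ and $\mathfrak{b}$, and that the two factors \emph{commute}, i.e. $[\mathfrak{a},\mathfrak{b}]=0$, so that the bracket on $\mathfrak{l}$ splits componentwise. The first feature is already at hand: conditions (ii) and (iii) of Definition \ref{semidirect-def} say precisely that $\mathfrak{l}=\mathfrak{a}+\mathfrak{b}$ with $\mathfrak{a}\cap\mathfrak{b}=0$, which is the vector-space internal direct sum. So the entire content of the proposition lies in establishing the commuting condition.

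To prove $[\mathfrak{a},\mathfrak{b}]=0$, I would fix arbitrary elements $a\in\mathfrak{a}$ and $b\in\mathfrak{b}$ and examine the single bracket $[a,b]$ from two sides. Since $\mathfrak{a}$ is an ideal (hypothesis (i)), we have $[a,b]\in[\mathfrak{a},\mathfrak{l}]\subseteq\mathfrak{a}$. Since $\mathfrak{b}$ is an ideal (the extra hypothesis of the proposition), we have $[a,b]\in[\mathfrak{l},\mathfrak{b}]\subseteq\mathfrak{b}$. Hence $[a,b]\in\mathfrak{a}\cap\mathfrak{b}$, which is $0$ by (iii). As $a$ and $b$ were arbitrary, $[\mathfrak{a},\mathfrak{b}]=0$.

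Finally I would assemble the two pieces. Writing any two elements of $\mathfrak{l}$ as $x=a_1+b_1$ and $y=a_2+b_2$ with $a_i\in\mathfrak{a}$, $b_i\in\mathfrak{b}$ (uniquely, by the vector-space direct sum), bilinearity of the bracket gives
\[ [x,y]=[a_1,a_2]+[a_1,b_2]+[b_1,a_2]+[b_1,b_2]=[a_1,a_2]+[b_1,b_2], \]
because the two mixed terms vanish by the previous step. This is exactly the bracket of the direct sum $\mathfrak{a}\oplus\mathfrak{b}$, so $\mathfrak{l}=\mathfrak{a}\oplus\mathfrak{b}$ as Lie algebras.

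I do not expect a genuine obstacle here: the argument is short, and the only point requiring care is conceptual rather than technical, namely that $\oplus$ for Lie algebras demands the vanishing of the cross brackets and not merely a vector-space splitting. It is precisely the symmetry between the two ideal hypotheses, combined with $\mathfrak{a}\cap\mathfrak{b}=0$, that forces this vanishing; dropping the assumption that $\mathfrak{b}$ is an ideal would leave $[\mathfrak{a},\mathfrak{b}]\subseteq\mathfrak{a}$ possibly nonzero, which is exactly the situation of a genuine (non-direct) semidirect sum and explains why the extra hypothesis is needed.
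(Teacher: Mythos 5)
Your proof is correct and complete: the key observation that $[\mathfrak{a},\mathfrak{b}]\subseteq\mathfrak{a}\cap\mathfrak{b}=0$ because both subalgebras are ideals, combined with the vector-space splitting from conditions (ii) and (iii), is exactly the standard argument. The paper itself gives no proof but merely cites Knapp, and the argument found there is essentially the one you have written out, so there is nothing to compare beyond noting that you have supplied the details the paper delegates to the reference.
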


\begin{proof}
See \cite[Chapter 4, \S 2]{knapp} for the proof.
\end{proof}

The reader can find the corresponding notion of \textit{semidirect product} of two groups in \cite[Definition 5.72]{hofmor} and Proposition \ref{fact} is described by \cite[Exercises E5.25, E5.26, E5.27]{hofmor} in terms of groups, instead of Lie algebras. This is to show that we are recalling some well known  constructions in algebra and geometry.

At this point, it is interesting to look for Lie algebras which are not necessarily of the forms that we have listed until now. One of the interesting examples, which will play a fundamental role in our studies, is the \textit{Heisenberg Lie algebras}, or briefly \textit{Heisenberg algebras}. We need to recall some classical notions from \cite{ goze, hofmor, knapp}, in order to define Heisenberg algebras.

\begin{defn} \label{derived}If $\mathfrak{l}$ is a Lie algebra (over the field $\mathbb{C}$ of the complex numbers),  the \textit{derived subalgebra} of $\mathfrak{l}$ is the smallest subalgebra
\[[\mathfrak{l},\mathfrak{l}]=\langle [a,b]  \ | \ a,b \in \mathfrak{l} \rangle \]
of $\mathfrak{l}$ generated by the commutators $[a,b]=ab-ba$.
\end{defn}

From the literature (see \cite{goze, hofmor, knapp}), it is well known that $[\mathfrak{l},\mathfrak{l}]$ is  an ideal of $\mathfrak{l}$ and that the quotient  $$\frac{\mathfrak{l}}{[\mathfrak{l},\mathfrak{l}]}=\left\{ a+[\mathfrak{l},\mathfrak{l}] \ | \ a \in \mathfrak{l} \right\}$$ is an abelian Lie algebra.

Another important abelian subalgebra of $\mathfrak{l}$ is the following:

\begin{defn}\label{center}If $\mathfrak{l}$ is a Lie algebra (over the field $\mathbb{C}$ of the complex numbers),  the \textit{center} of $\mathfrak{l}$ is defined by
\[Z(\mathfrak{l})=\{ a \in \mathfrak{l}  \ | \ [a,b]=0  \ \forall b \in \mathfrak{l}\}. \]
 \end{defn}
The center of $\mathfrak{l}$ turns out to be, not only a set, but an ideal of $\mathfrak{l}$. In general we cannot determine whether the quotient Lie algebra $\mathfrak{l}/Z(\mathfrak{l})$ is  abelian  or not, but there are various criteria to determine how far is $\mathfrak{l}/Z(\mathfrak{l})$ from being abelian: one way is to define appropriate ideals and form a chain in such a way that one extends step by step the original center of $\mathfrak{l}$, as illustrated in \cite{goze, knapp}.

\begin{defn}\label{uppercentralseries}The \textit{upper central series} of $\mathfrak{l}$ is defined   inductively by
\[0=Z_0(\mathfrak{l}) \leq Z_1(\mathfrak{l})=Z(\mathfrak{l}) \leq Z_2(\mathfrak{l}) \leq \ldots \leq Z_i(\mathfrak{l}) \le Z_{i+1}(\mathfrak{l}) \leq \ldots, \]
where each $Z_i(\mathfrak{l})$ turns out to be an ideal of $\mathfrak{l}$ (called $i$th $center$ of $\mathfrak{l}$) and
\[ \frac{Z_1(\mathfrak{l})}{Z_0(\mathfrak{l})}=Z(\mathfrak{l}), \frac{Z_2(\mathfrak{l})}{Z_1(\mathfrak{l})}=Z\left(\frac{\mathfrak{l}}{Z(\mathfrak{l})}\right),   \ldots, \frac{Z_{i+1}(\mathfrak{l})}{Z_i(\mathfrak{l})}=Z\left(\frac{\mathfrak{l}}{Z_i(\mathfrak{l})}\right), \ldots. \]
\end{defn}

Using elementary computations, one can see that the above definition of $i$th center of $\mathfrak{l}$ (in terms of quotients) is equivalent (in terms of commutators) to
\[ Z_{i+1}(\mathfrak{l})=\{ a \in \mathfrak{l} \mid  [a,b]\in Z_i(\mathfrak{l}) \ \forall  b\in \mathfrak{l}  \}. \]
Let's give  a hint (omitting the first step which is trivial): since $Z_1(\mathfrak{l})=Z(\mathfrak{l})$, $Z(\mathfrak{l}/Z(\mathfrak{l}))$ is clearly abelian and so we have in terms of cosets $[a+Z(\mathfrak{l}),b+Z(\mathfrak{l})]=[a,b] + Z(\mathfrak{l}) = Z(\mathfrak{l})$, hence $[a,b] \in Z(\mathfrak{l})$. This gives the definition of  $Z_2(\mathfrak{l})$. Viceversa, beginning with the definition of $Z_2(\mathfrak{l})$, we get that $Z(\mathfrak{l}/Z(\mathfrak{l}))$ is abelian. Of course, the argument can be repeated step by step, in order to show what we claimed. At this point we may ask whether the upper central series of $\mathfrak{l}$ stops or not. There is a terminology for this scope.

\begin{defn}\label{nilpotent}
A Lie algebra $\mathfrak{l}$ is called \textit{nilpotent of class $c$} if $\mathfrak{l}=Z_c(\mathfrak{l})$, that is, if the upper central series of $\mathfrak{l}$ ends after a finite number $c \ge 0$ of steps.
\end{defn}

With the present terminology, it is clear that abelian Lie algebras are nilpotent  of class $c=1$. We are interested here to nilpotent Lie algebras of class $2$ and to their applications in the context of the models of mathematical physics developed in \cite{baginbagbook}-\cite{bag8}. One of the reasons is that  we find some important Lie algebras among nilpotent Lie algebras of class $2$; they appear since long time in the literature on quantum mechanics. Following \cite{goze, knapp, nr1, nr2},

\begin{defn}\label{Heisenberg}  A finite dimensional Lie algebra  $\mathfrak{l}$ is called $Heisenberg$ provided that $[\mathfrak{l},\mathfrak{l}]=Z(\mathfrak{l})$ and $\mathrm{dim}([\mathfrak{l},\mathfrak{l}]) = 1$. Such algebras are odd dimensional with
basis $v_1, \ldots , v_{2m}, v$ and the only non--zero commutator between basis elements is $[v_{2i-1}, v_{2i}] = -
[v_{2i}, v_{2i-1}]= v$ for $i = 1,2, \ldots ,m$. The symbol $\mathfrak{h}(m)$
denotes the Heisenberg algebra of dimension $2m + 1$.
\end{defn}

At this point, we come back to Definition  \ref{semidirect-def}.

\begin{prop}\label{semi-heisenberg}
There are two abelian Lie algebras $\mathfrak{a}$ and $\mathfrak{b}$ such that $\mathfrak{h}(m)$ is the semidirect sum of $\mathfrak{a}$ and $\mathfrak{b}$.
\end{prop}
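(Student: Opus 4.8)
The plan is to write down an explicit pair of abelian subalgebras inside $\mathfrak{h}(m)$ and to check the three conditions of Definition \ref{semidirect-def} directly against the commutation relations. Recall from Definition \ref{Heisenberg} that $\mathfrak{h}(m)$ has basis $v_1, \ldots, v_{2m}, v$ with the only nonzero brackets given by $[v_{2i-1}, v_{2i}] = -[v_{2i}, v_{2i-1}] = v$ for $i = 1, \ldots, m$, and in particular $v$ spans the center. Separating the basis according to the parity of the index, I would set
\[ \mathfrak{a} = \langle v_1, v_3, \ldots, v_{2m-1}, v \rangle, \qquad \mathfrak{b} = \langle v_2, v_4, \ldots, v_{2m} \rangle. \]

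First I would verify that both factors are abelian. Inside $\mathfrak{a}$ every bracket of two odd--indexed generators vanishes, since a nonzero bracket requires one odd and one even index, and brackets with $v$ vanish because $v$ is central; hence $[\mathfrak{a}, \mathfrak{a}] = 0$. The same parity argument shows $[\mathfrak{b}, \mathfrak{b}] = 0$, so $\mathfrak{b}$ is abelian as well. Conditions (ii) and (iii) are then immediate: the two generating sets together exhaust the basis $v_1, \ldots, v_{2m}, v$, so $\mathfrak{h}(m) = \mathfrak{a} + \mathfrak{b}$; and they are disjoint parts of a basis, so by linear independence $\mathfrak{a} \cap \mathfrak{b} = 0$, with $\dim \mathfrak{a} + \dim \mathfrak{b} = (m+1) + m = 2m+1 = \dim \mathfrak{h}(m)$ confirming the count.

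The one step that requires the commutation relations in an essential way is condition (i), that $\mathfrak{a}$ is an ideal. Here I would compute $[\mathfrak{a}, \mathfrak{h}(m)]$ on generators: a bracket $[v_{2i-1}, v_{2j-1}]$ or $[v_{2i-1}, v]$ is zero, while $[v_{2i-1}, v_{2j}] = \delta_{ij}\, v$ lands in $\mathfrak{a}$ precisely because $v \in \mathfrak{a}$. Thus every bracket of an element of $\mathfrak{a}$ with an element of $\mathfrak{h}(m)$ again lies in $\mathfrak{a}$. This is where the (forced) placement of the central generator $v$ inside $\mathfrak{a}$ rather than $\mathfrak{b}$ does the work: had $v$ been assigned to $\mathfrak{b}$, the image $[v_{2i-1}, v_{2i}] = v$ would escape $\mathfrak{a}$ and the ideal property would fail. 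The main (and only modest) subtlety is therefore bookkeeping the center correctly; there is no genuine obstacle. Finally I would observe that $\mathfrak{b}$ is itself \emph{not} an ideal, since $[v_{2i}, v_{2i-1}] = -v \notin \mathfrak{b}$, so by Proposition \ref{fact} the decomposition is a genuine semidirect sum and not a direct sum, which is precisely the feature of interest for the sequel.
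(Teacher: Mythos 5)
Your proposal is correct and uses exactly the same decomposition as the paper's proof, namely $\mathfrak{a}=\langle v_1,v_3,\ldots,v_{2m-1},v\rangle$ and $\mathfrak{b}=\langle v_2,v_4,\ldots,v_{2m}\rangle$; you simply spell out the parity argument and the ideal check that the paper leaves as a one-line appeal to the commutation relations. The extra observations (the dimension count and the fact that $\mathfrak{b}$ fails to be an ideal, so the sum is genuinely not direct) are accurate and consistent with the paper's later discussion.
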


\begin{proof} Considering $\mathfrak{a}=\langle v_1, v_3, v_5, \ldots, v_{2m-1}, v \rangle$ and $\mathfrak{b}=\langle v_2, v_4, \ldots, v_{2m} \rangle$, the commutator relations in $\mathfrak{h}(m)$ show that both $\mathfrak{a}$ and $\mathfrak{b}$ are abelian, they are disjoint and $\mathfrak{a}$ is an ideal of $\mathfrak{l}$. Moreover $\mathfrak{a} + \mathfrak{b}= \mathfrak{l}$. The result follows.
\end{proof}

In particular, it turns out to be useful the following consequence:

\begin{cor}\label{trick} If $\mathfrak{c}$ is an abelian Lie algebra, then  $\mathfrak{l}=\mathfrak{h}(m) \oplus \mathfrak{c}$  is the semidirect sum of two abelian Lie algebras $\mathfrak{d}$ and $\mathfrak{e}$.
\end{cor}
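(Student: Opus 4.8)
The plan is to reduce the statement directly to Proposition \ref{semi-heisenberg} by absorbing the extra abelian summand $\mathfrak{c}$ into the ideal factor produced there. First I would invoke Proposition \ref{semi-heisenberg} to obtain abelian Lie subalgebras $\mathfrak{a}=\langle v_1, v_3, \ldots, v_{2m-1}, v\rangle$ and $\mathfrak{b}=\langle v_2, v_4, \ldots, v_{2m}\rangle$ of $\mathfrak{h}(m)$ that realize $\mathfrak{h}(m)$ as the semidirect sum of $\mathfrak{a}$ and $\mathfrak{b}$, so that $\mathfrak{a}$ is an abelian ideal, $\mathfrak{b}$ is abelian, $\mathfrak{a}+\mathfrak{b}=\mathfrak{h}(m)$ and $\mathfrak{a}\cap\mathfrak{b}=0$. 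Then I would set $\mathfrak{d}=\mathfrak{a}\oplus\mathfrak{c}$ and $\mathfrak{e}=\mathfrak{b}$, and verify that this pair satisfies Definition \ref{semidirect-def} inside $\mathfrak{l}=\mathfrak{h}(m)\oplus\mathfrak{c}$.

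The key structural fact I would exploit is that, since $\mathfrak{l}$ is a direct sum, the summand $\mathfrak{c}$ is central in $\mathfrak{l}$: one has $[\mathfrak{c},\mathfrak{h}(m)]=0$ by definition of the direct sum and $[\mathfrak{c},\mathfrak{c}]=0$ because $\mathfrak{c}$ is abelian. This immediately gives that $\mathfrak{d}=\mathfrak{a}\oplus\mathfrak{c}$ is abelian, being a sum of two abelian pieces that commute with one another, while $\mathfrak{e}=\mathfrak{b}$ is abelian by construction.

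Next I would check conditions (i)--(iii) of Definition \ref{semidirect-def}. For (ii), $\mathfrak{d}+\mathfrak{e}=(\mathfrak{a}+\mathfrak{c})+\mathfrak{b}=(\mathfrak{a}+\mathfrak{b})+\mathfrak{c}=\mathfrak{h}(m)+\mathfrak{c}=\mathfrak{l}$. For (iii), since $\mathfrak{b}\subseteq\mathfrak{h}(m)$ and $\mathfrak{h}(m)\cap\mathfrak{c}=0$, any element of $\mathfrak{d}\cap\mathfrak{e}$ must already lie in $\mathfrak{a}\cap\mathfrak{b}=0$, giving $\mathfrak{d}\cap\mathfrak{e}=0$. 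For (i), I would expand $[\mathfrak{d},\mathfrak{l}]=[\mathfrak{a}\oplus\mathfrak{c},\mathfrak{h}(m)\oplus\mathfrak{c}]$ by bilinearity of the bracket; using the centrality of $\mathfrak{c}$ together with the fact that $\mathfrak{a}$ is an ideal of $\mathfrak{h}(m)$, every resulting bracket falls inside $\mathfrak{a}\subseteq\mathfrak{d}$, so $\mathfrak{d}$ is an ideal of $\mathfrak{l}$.

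The only point requiring genuine care, and the one I expect to be the mild obstacle, is condition (i): one must confirm that adjoining the central factor $\mathfrak{c}$ to $\mathfrak{a}$ does not destroy the ideal property, and this is exactly where the centrality of $\mathfrak{c}$ in the direct sum is indispensable. Once these verifications are in place, $\mathfrak{d}$ and $\mathfrak{e}$ are the required abelian factors and the corollary follows.
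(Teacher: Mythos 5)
Your proof is correct, but it distributes the central summand $\mathfrak{c}$ the opposite way from the paper. The paper's proof keeps the ideal factor untouched: it writes $\mathfrak{l}=(\mathfrak{a}+\mathfrak{b})\oplus\mathfrak{c}=\mathfrak{a}+(\mathfrak{b}\oplus\mathfrak{c})$ and takes $\mathfrak{d}=\mathfrak{a}$, $\mathfrak{e}=\mathfrak{b}\oplus\mathfrak{c}$, so that the ideality of $\mathfrak{d}$ in $\mathfrak{l}$ is inherited immediately from Proposition \ref{semi-heisenberg} (using only that $\mathfrak{c}$ commutes with $\mathfrak{h}(m)$), and the abelianness of $\mathfrak{e}$ is delegated to Example \ref{directproducts}. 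You instead absorb $\mathfrak{c}$ into the ideal, taking $\mathfrak{d}=\mathfrak{a}\oplus\mathfrak{c}$ and $\mathfrak{e}=\mathfrak{b}$; this forces you to re-verify the ideal property for the enlarged factor, which you do correctly via the centrality of $\mathfrak{c}$, and your checks of conditions (ii) and (iii) and of abelianness are all sound. The two routes are mirror images resting on the same key lemma; the paper's choice minimizes the verification (the ideal is unchanged, so nothing about condition (i) needs rechecking), while yours costs one extra bracket computation but has the side benefit of exhibiting a second, genuinely different pair of abelian factors --- of dimensions $m+1+\dim\mathfrak{c}$ and $m$ rather than $m+1$ and $m+\dim\mathfrak{c}$ --- showing that the decomposition asserted in the corollary is far from unique.
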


\begin{proof} Define $\mathfrak{a}$ and $\mathfrak{b}$ as in Proposition \ref{semi-heisenberg}. Then
$\mathfrak{h}(m)= \mathfrak{a} + \mathfrak{b}$ and so
$$\mathfrak{l}=(\mathfrak{a} + \mathfrak{b}) \oplus \mathfrak{c} =  \mathfrak{a} + (\mathfrak{b} \oplus \mathfrak{c}), $$
where we get the result for $\mathfrak{d}=\mathfrak{a}$ and $\mathfrak{e}=(\mathfrak{b} \oplus \mathfrak{c})$, abelian by Example \ref{directproducts}.
\end{proof}

The reader may note that Proposition \ref{semi-heisenberg} is even providing an additional information on the factorization  of $\mathfrak{h}(m)$ in the semidirect sum of two Lie algebras which are abelian. $\mathfrak{h}(m)$ is particularly relevant when we want to look for classifications of finite dimensional nilpotent Lie algebras over the field of the complex numbers. Indeed we end this section with an important result of classification (see \cite[p.646]{degraaf}), which will be used in the main results of the present paper.

\begin{thm}[Classification of Finite Dimensional Nilpotent Lie Algebras of Dimension 3, 4 and  5]\label{classification}
Let $\mathfrak{l}$ be a finite dimensional nilpotent Lie algebra (over any field of characteristic $\neq 2$) and $\mathfrak{i}$ an abelian Lie algebra of dimension 1. Then
\begin{itemize}
\item[(1)] $\mathrm{dim} \ \mathfrak{l} =3$ if and only if $\mathfrak{l}$ is isomorphic to one of the following Lie algebras:
 \begin{itemize}
\item[-] $\mathfrak{l}_{3,1}$ abelian Lie algebra of dimension 3,
\item[-] $\mathfrak{l}_{3,2}  \simeq \mathfrak{h}(1)$.
\end{itemize}
\item[(2)]$\mathrm{dim} \ \mathfrak{l} =4$ if and only if    $\mathfrak{l}$ is isomorphic to one of the following Lie algebras:
 \begin{itemize}
\item[-] $\mathfrak{l}_{4,1} =  \mathfrak{l}_{3,1} \oplus \mathfrak{i}$,
\item[-] $\mathfrak{l}_{4,2} =  \mathfrak{l}_{3,2} \oplus \mathfrak{i}$
\item[-] $\mathfrak{l}_{4,3} = \langle v_1, v_2, v_3, v_4 \ | \ [v_1,v_2]=v_3, [v_1,v_3]=v_4 \rangle$
\end{itemize}
\item[(3)]$\mathrm{dim} \ \mathfrak{l} =5$ if and only if $\mathfrak{l}$ is isomorphic to one of the following Lie algebras:
\begin{itemize}
\item[-] $\mathfrak{l}_{5,1} = \mathfrak{l}_{4,1} \oplus \mathfrak{i} $,
\item[-] $\mathfrak{l}_{5,2} = \mathfrak{l}_{4,2} \oplus \mathfrak{i} \simeq \mathfrak{h}(1) \oplus \mathfrak{i} \oplus \mathfrak{i}$,
\item[-] $\mathfrak{l}_{5,3} = \mathfrak{l}_{4,3} \oplus \mathfrak{i} $,
\item[-] $\mathfrak{l}_{5,4} =  \langle v_1, v_2, v_3, v_4, v_5 \ | \ [v_1,v_2]=[v_3,v_4]=v_5 \rangle \simeq \mathfrak{h}(2) $,
\item[-] $\mathfrak{l}_{5,5} =  \langle v_1, v_2, v_3, v_4, v_5 \ | \ [v_1,v_2]=v_3, [v_1,v_3]=[v_2,v_4]=v_5 \rangle$,
\item[-] $\mathfrak{l}_{5,6} =  \langle v_1, v_2, v_3, v_4, v_5 \ | \ [v_1,v_2]=v_3, [v_1,v_3]=v_4,$ $$[v_1,v_4]=[v_2,,v_3]=v_5  \rangle,$$
\item[-] $\mathfrak{l}_{5,7} =  \langle v_1, v_2, v_3, v_4, v_5 \ | \ [v_1,v_2]=v_3, [v_1,v_3]=v_4, [v_1,v_4]=v_5 \rangle$,
\item[-] $\mathfrak{l}_{5,8} =  \langle v_1, v_2, v_3, v_4, v_5 \ | \ [v_1,v_2]=v_4, [v_1,v_3]=v_5 \rangle$,
\item[-] $\mathfrak{l}_{5,9} =  \langle v_1, v_2, v_3, v_4, v_5 \ | \ [v_1,v_2]=v_3, [v_1,v_3]=v_4, [v_2,v_3]=v_5 \rangle$.
\end{itemize}
\end{itemize}
\end{thm}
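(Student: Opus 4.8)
The plan is to prove the three equivalences by induction on $\dim\mathfrak{l}$, exploiting that every finite dimensional nilpotent Lie algebra is assembled from lower-dimensional ones by central extensions and that a central element lying outside $[\mathfrak{l},\mathfrak{l}]$ always splits off as a direct summand. First I would record the base data. In dimensions $1$ and $2$ every nilpotent Lie algebra is abelian: if $[x,y]\neq 0$ in a two-dimensional algebra, then $\mathrm{ad}_x$ acts as a nonzero scalar on the line $[\mathfrak{l},\mathfrak{l}]$, contradicting nilpotency. In dimension $3$ one checks directly that $\dim[\mathfrak{l},\mathfrak{l}]\in\{0,1\}$ and that a one-dimensional derived algebra is forced to be central, so after normalizing a basis we obtain exactly $\mathfrak{l}_{3,1}$ and $\mathfrak{l}_{3,2}\simeq\mathfrak{h}(1)$. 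This settles part (1) and supplies the seed for (2) and (3).

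Next I would separate the decomposable algebras from the indecomposable ones. If $\mathfrak{l}$ admits a central element $v\notin[\mathfrak{l},\mathfrak{l}]$, then choosing a hyperplane $\mathfrak{m}\supseteq[\mathfrak{l},\mathfrak{l}]$ complementary to $\langle v\rangle$ exhibits both $\langle v\rangle$ and $\mathfrak{m}$ as ideals, so $\mathfrak{l}\cong\mathfrak{m}\oplus\mathfrak{i}$ with $\dim\mathfrak{m}=\dim\mathfrak{l}-1$. Invoking the inductive classification together with Proposition \ref{fact} then produces precisely the decomposable families $\mathfrak{l}_{4,1},\mathfrak{l}_{4,2}$ in dimension $4$ and $\mathfrak{l}_{5,1},\mathfrak{l}_{5,2},\mathfrak{l}_{5,3}$ in dimension $5$. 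It therefore remains to treat the indecomposable algebras, and for these the argument above shows $Z(\mathfrak{l})\subseteq[\mathfrak{l},\mathfrak{l}]$.

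For the indecomposable case I would use the central-extension method. Choosing a one-dimensional subspace $W=\langle w\rangle\subseteq Z(\mathfrak{l})$, the quotient $\mathfrak{q}=\mathfrak{l}/W$ is nilpotent of dimension $\dim\mathfrak{l}-1$ and is already known by induction, while $\mathfrak{l}$ is recovered from $\mathfrak{q}$ by a scalar central extension, that is, by a class in the second cohomology $H^2(\mathfrak{q},\mathbb{F})$ with trivial coefficients. Concretely one picks a $2$-cocycle $\theta\in\wedge^2\mathfrak{q}^*$ satisfying $\theta([x,y],z)+\theta([y,z],x)+\theta([z,x],y)=0$ and sets $[x,y]_{\mathfrak{l}}=[x,y]_{\mathfrak{q}}+\theta(x,y)\,w$. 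Running through the admissible $\mathfrak{q}$ of the right dimension and listing the cocycles modulo coboundaries yields all candidate multiplication tables; matching normal forms then reproduces $\mathfrak{l}_{4,3}$ in dimension $4$ and the indecomposable algebras $\mathfrak{l}_{5,4}\simeq\mathfrak{h}(2)$, $\mathfrak{l}_{5,5},\mathfrak{l}_{5,6},\mathfrak{l}_{5,7},\mathfrak{l}_{5,8},\mathfrak{l}_{5,9}$ in dimension $5$.

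The hard part will not be producing the extensions but showing that the list is irredundant and complete. Two cocycles $\theta,\theta'$ give isomorphic Lie algebras precisely when they lie in the same orbit of the induced action of $\mathrm{Aut}(\mathfrak{q})\times\mathbb{F}^\times$ on $H^2(\mathfrak{q},\mathbb{F})$, so the core of the proof is an orbit computation on the relevant Grassmannians, together with the bookkeeping that discards those extensions whose center is too large (these reappear as the decomposable algebras already found). This is exactly where the dimension-five case becomes delicate, since $\mathfrak{q}$ then ranges over several four-dimensional algebras with rich automorphism groups. It is also where the hypothesis $\mathrm{char}\,\mathbb{F}\neq 2$ intervenes: the normalizations bringing a cocycle into canonical form amount to diagonalizing, or completing the square in, an alternating/symmetric form attached to $\theta$, which requires dividing by $2$, and in characteristic $2$ some orbits merge and the list changes. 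I would close the \emph{if and only if} by verifying, for each chosen representative, that the resulting algebra is genuinely nilpotent of the stated class and indecomposable.
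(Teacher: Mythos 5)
You should first be aware that the paper itself does not prove this theorem: it is imported verbatim from de Graaf's classification (the citation to \cite{degraaf}, p.~646), so there is no internal proof to compare against. Your strategy --- splitting off a one-dimensional abelian direct summand whenever $Z(\mathfrak{l})\not\subseteq[\mathfrak{l},\mathfrak{l}]$, and otherwise realizing $\mathfrak{l}$ as a one-dimensional central extension of a known algebra $\mathfrak{q}$ of one lower dimension, classified by orbits of $\mathrm{Aut}(\mathfrak{q})$ on $H^2(\mathfrak{q},\mathbb{F})$ --- is exactly the Skjelbred--Sund method that de Graaf uses, so methodologically you are on the same track as the actual source. The base cases, the splitting lemma for a central element outside the derived subalgebra, and the reduction of isomorphism to an orbit problem are all correctly set up.

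The genuine gap is that the decisive step is named but never performed. For a classification theorem, the content \emph{is} the case analysis: for each four-dimensional nilpotent $\mathfrak{q}\in\{\mathfrak{l}_{4,1},\mathfrak{l}_{4,2},\mathfrak{l}_{4,3}\}$ you must compute $H^2(\mathfrak{q},\mathbb{F})$, describe $\mathrm{Aut}(\mathfrak{q})$, enumerate the orbits of cocycles whose radical meets $Z(\mathfrak{q})$ trivially (so that the extension has center exactly $W$ and is indecomposable), and then match the resulting multiplication tables against $\mathfrak{l}_{5,4},\dots,\mathfrak{l}_{5,9}$, checking in particular that no two of these six are isomorphic. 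None of this is carried out, so completeness and irredundancy of the dimension-five list --- the whole assertion of part (3) --- remains unestablished; as written the proposal is a correct plan, not a proof. Two smaller points: your explanation of where $\mathrm{char}\,\mathbb{F}\neq 2$ enters is misplaced for this range of dimensions (de Graaf notes that the classification in dimension $\le 5$ is in fact independent of the ground field; the characteristic-two pathologies only appear in dimension $6$), and in the decomposable branch you should say a word about why, in dimension $\le 5$, ``indecomposable'' is equivalent to ``no one-dimensional abelian direct summand'' (any nontrivial direct decomposition has a factor of dimension at most two, hence abelian), since that is what justifies reducing the indecomposable case to $Z(\mathfrak{l})\subseteq[\mathfrak{l},\mathfrak{l}]$.
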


In dimension five, the reader can note the presence of  decomposable Lie algebras in Theorem \ref{classification}, that is, Lie algebras satisfying Definition \ref{semidirect-def} with the presence of an abelian factor $\mathfrak{i}$ of dimension one.  For instance, these are $\mathfrak{l}_{5,k}$, when $k=1,2,3$, while it is more difficult to see whether  $\mathfrak{l}_{5,k}$ satisfies, or does not satisfy, the conditions of Definition \ref{semidirect-def} only looking at the generators and at the relations. We will recognize concrete physical models in which this is possible and will give details later on, identifying generators and relations in the case of nilpotent Lie algebras of dimension 5 via functional operators which are fundamental in quantum mechanics.

\section{A short review on $\mathcal{D}$-pseudo bosons}\label{sectpbs}

The next two sections are devoted to recall some notions of functional analysis and operator algebras, which can be found in \cite{aitbook, chri, schu,baginbagbook}, which are relevant for us.  Let $\mathcal{H}$ be a given Hilbert space (over the field $\mathbb{C}$ of complex numbers) with scalar product $\left<.,.\right>$ and related norm $\|.\|$.
Let $a$ and $b$ be two operators
on $\mathcal{H}$, with domains $D(a)$ and $D(b)$ respectively, $a^\dagger$ and $b^\dagger$ their adjoint, and let $\mathcal{D}$ be a dense subspace of $\mathcal{H}$
such that $a^\sharp\mathcal{D}\subseteq\mathcal{D}$ and $b^\sharp \mathcal{D} \subseteq \mathcal{D}$, where with $x^\sharp$ we indicate $x$ or $x^\dagger$. Of course, $\mathcal{D}\subseteq D(a^\sharp)$
and $\mathcal{D}\subseteq D(b^\sharp)$.

\begin{defn}\label{def21}
The operators $(a,b)$ are $\mathcal{D}$-\textit{pseudo-bosonic}  if, for all $f\in\mathcal{D}$, we have
\begin{equation}\label{A1}
a\,b\,f-b\,a\,f=f.
\end{equation}
\end{defn}

\begin{rem} Notice that, in the particular case when $b=a^\dagger$, the pseudo-bosonic operators are just  creation and annihilation operators obeying the \textit{canonical commutation relations} (CCR), $[c,c^\dagger]=\mathbb{I}$, which are well known to be unbounded. Hence they can only be defined on suitable domains. We meet similar problems when the CCR are replaced by \eqref{A1}.
\end{rem}

What we want to do now is to extend the ordinary construction well known for the CCR  to our case. For bosons we know that a vacuum $e_0$ does exist which is annihilated by $c$, $c\,e_0=0$, and which belongs to the domain of all the powers of $c^\dagger$. For each $n \ge 0$,  we can construct a set of vectors of $\mathcal{H}$
$$e_n=\frac{1}{\sqrt{n!}}(c^\dagger)^n e_0$$
 which are all in the domain of the number operator $N_0=c^\dagger c$, and which are its eigenstates:  $N_0e_n=ne_n$. The set
 $$\mathcal{F}_e=\{e_n \ | \ n\geq 0\}$$ is in fact an orthonormal basis for $\mathcal{H}$. If $C^\infty(\mathbb{R})$ denotes the smooth functions with support on $\mathbb{R}$ and ${\| \  \|}_2$ the usual $L^2$-norm, we fix $$\mathcal{H}={\mathcal L}^2(\Bbb R)=\{f(x) \mbox{ is Lebesgue-measurable on } \Bbb R \ | \ {\| f \|}_2 < \infty \}$$ then each $e_n$ is a well known function $e_n(x)$ in
\begin{equation}\mathcal{S}(\Bbb R)=\{f(x) \in C^\infty(\mathbb{R})  \  |  \  \lim_{|x|,\infty}|x|^kf^{(l)}(x)=0, \, \forall k,l\geq0  \ \},\label{sr}\end{equation}
 the set of  smooth functions which decrease, together with their derivatives, faster than any inverse power (see \cite{messiah}).

When CCR are replaced by (\ref{A1}), there is no a priori reason for such a situation to remain unchanged. For this reason, we need to impose some reasonable conditions which are verified in explicit models, and which reproduce back the well known bosonic settings when $b=a^\dagger$. In particular, our starting assumptions are the following:

\vspace{2mm}

{\bf Assumption $\mathcal{D}$-pb 1.--}  there exists a non-zero $\varphi_{ 0}\in\mathcal{D}$ such that $a\,\varphi_{ 0}=0$.

\vspace{1mm}

{\bf Assumption $\mathcal{D}$-pb 2.--}  there exists a non-zero $\Psi_{ 0}\in\mathcal{D}$ such that $b^\dagger\,\Psi_{ 0}=0$.

\vspace{2mm}

It is obvious that, since $\mathcal{D}$ is stable under the action of the operators introduced above,  $\varphi_0\in D^\infty(b)=\cap_{k\geq0}D(b^k)$ and  $\Psi_0\in D^\infty(a^\dagger)$, so
that the vectors
\begin{equation}\label{A2}
 \varphi_n=\frac{1}{\sqrt{n!}}\,b^n\varphi_0,\qquad \Psi_n=\frac{1}{\sqrt{n!}}\,{a^\dagger}^n\Psi_0,
 \end{equation}
$n\geq0$, can be defined and they all belong to $\mathcal{D}$. Then, they also belong to the domains of $a^\sharp$, $b^\sharp$ and $N^\sharp$, where $N=ba$.

We see that, from a practical point of view, $\mathcal{D}$ is the natural space to work with and, in this sense, it is even more relevant than $\mathcal{H}$. Let's put $$\mathcal{F}_\Psi=\{\Psi_{ n} \ |  \ n\geq0\} \ \mathrm{and} \  \mathcal{F}_\varphi=\{\varphi_{ n} \ | \ n\geq0\}.$$
It is  simple to deduce the following lowering and raising relations:
\begin{equation}\label{A3}
\left\{
    \begin{array}{ll}
b\,\varphi_n=\sqrt{n+1}\varphi_{n+1}, \qquad\qquad\quad\,\, n\geq 0,\\
a\,\varphi_0=0,\quad a\varphi_n=\sqrt{n}\,\varphi_{n-1}, \qquad\,\, n\geq 1,\\
a^\dagger\Psi_n=\sqrt{n+1}\Psi_{n+1}, \qquad\qquad\quad\, n\geq 0,\\
b^\dagger\Psi_0=0,\quad b^\dagger\Psi_n=\sqrt{n}\,\Psi_{n-1}, \qquad n\geq 1,\\
       \end{array}
        \right.
\end{equation}
 as well as the eigenvalue equations $N\varphi_n=n\varphi_n$ and  $N^\dagger\Psi_n=n\Psi_n$, $n\geq0$. In particular, as a consequence
of these last two equations,  if we choose the normalization of $\varphi_0$ and $\Psi_0$ in such a way that $\left<\varphi_0,\Psi_0\right>=1$, then we may deduce
\begin{equation} \label{A4} \left<\varphi_n,\Psi_m\right>=\delta_{n,m},
\end{equation}
 for all $n, m\geq0$. Hence $\mathcal{F}_\Psi$ and $\mathcal{F}_\varphi$ are biorthogonal.

 It is easy to see that, if $b=a^\dagger$, then $\varphi_n=\Psi_n=e_n$ (identifying $a$ with $c$), so that biorthogonality is replaced by a simpler orthonormality. Moreover, the relations in (\ref{A3}) collapse, and only one number operator exists, since in this case $N=N^\dagger$.

 The analogy with ordinary bosons suggests us to consider the following:

\vspace{2mm}

{\bf Assumption $\mathcal{D}$-pb 3.--}  $\mathcal{F}_\varphi$ is a basis for $\mathcal{H}$.

\vspace{1mm}

This is equivalent to requiring that $\mathcal{F}_\Psi$ is a basis for $\mathcal{H}$ as well. The reader may refer to \cite{chri} for more details. However, several  physical models show that $\mathcal{F}_\varphi$ is {\bf not}, in general, a basis for $\mathcal{H}$, but it is still complete (or total, as some authors prefer to say) in $\mathcal{H}$. The present circumstance suggests to introduce a weaker version of  Assumption $\mathcal{D}$-pb 3, which was originally studied in \cite{baginbagbook}. In order to do this, we recall the following notion:

\begin{defn} Let $\mathcal{G}$ be a dense subspace in a Hilbert space $\mathcal{H}$. For all $f$ and $g$ in $\mathcal{G}$,
we say that $\mathcal{F}_\varphi$ and $\mathcal{F}_\Psi$ are $\mathcal{G}$-\textit{quasibases} for $\mathcal{H}$, if
\begin{equation}\label{A4b}
\left<f,g\right>=\sum_{n\geq0}\left<f,\varphi_n\right>\left<\Psi_n,g\right>=\sum_{n\geq0}\left<f,\Psi_n\right>\left<\varphi_n,g\right>.
\end{equation}
\end{defn}

This notion can be seen as a weak form of the resolution of the identity, restricted to $\mathcal{G}$. Of course, if $f\in\mathcal{G}$ is orthogonal to all the $\varphi_n$'s, or to all the $\Psi_n$'s, then (\ref{A4b}) implies that $f=0$. Hence $\mathcal{F}_\varphi$ and $\mathcal{F}_\Psi$ are complete in $\mathcal{G}$.

We have all we need, in order to formulate properly the following condition.

\vspace{2mm}

{\bf Assumption $\mathcal{D}$-pbw 3.--}  For some subspace $\mathcal{G}$ dense in $\mathcal{H}$, $\mathcal{F}_\varphi$ and $\mathcal{F}_\Psi$ are $\mathcal{G}$-quasi bases.

\vspace{2mm}

On the other hand, in some particular case, it is possible to replace Assumption $\mathcal{D}$-pbw 3 above with its stronger version:

\vspace{2mm}

{\bf Assumption $\mathcal{D}$-pbs 3.--}  $\mathcal{F}_\varphi$ is a Riesz basis for $\mathcal{H}$.

\vspace{1mm}

We recall from \cite{chri} that  $\mathcal{F}_\varphi$ is a \textit{Riesz basis} for $\mathcal{H}$, if a bounded operator $S$, with bounded inverse $S^{-1}$, exists in $\mathcal{H}$, together with an orthonormal basis $\mathcal{F}_{\hat e}=\{\hat e_n\in\mathcal{H} \ | \ n\geq0\}$ such that $\varphi_n=S\hat e_n$, for all $n\geq0$.  Note that the uniqueness of the  basis biorthogonal to $\mathcal{F}_\varphi$, implies automatically that $\mathcal{F}_\Psi$ is also a Riesz basis for $\mathcal{H}$. In particular, we have that $\Psi_n=(S^{-1})^\dagger \hat e_n$.

$\mathcal{D}$ pseudo-bosons appear in several physical models arising in PT-quantum mechanics, but not only. Among the others,  it is possible to write in terms of $\mathcal{D}$ pseudo-bosonic operators $a$ and $b$:
\begin{itemize}
\item[(1)] the Hamiltonian for the  extended quantum harmonic oscillator (see \cite{dapro,bagswans});
\item[(2)] the Hamiltonian for the Swanson  model (see \cite{swans,bagswans});
\item[(3)] the Hamiltonian for the Black and Scholes equation (see \cite{roy2,bag2});
\item[(4)] the Hamiltonian for the models proposed by Bender and Jones (see \cite{benjon,baglat}).
\end{itemize}
We refer to \cite{baginbagbook} for several applications to physics of this framework, while more recent applications can be found in \cite{bagPR2015,bagrevA}. We also refer to \cite{baginbagbook} for more results on different aspects of $\mathcal{D}$-PBs.

Here, in view of our interest to Lie algebras, we are more interested in introducing an algebraic framework in which formula (\ref{A1}) can be cast in the form of a Lie bracket. In other words, we are interested to introduce now an algebraic rather than a functional point of view for our pseudo-bosonic operators. This will be done in the next section.

\section{An algebraic perspective  for the operators $a$ and $b$}

As we have discussed before, it is quite reasonable to expect that $a$, $b$ and their adjoints are not bounded operators. The simplest way to understand it is just to consider the case in which $b=a^\dagger$. In fact, in this case, it is well known that the ladder operators, and the related number operator as a consequence, are unbounded. When $b\neq a^\dagger$ the situation changes a little bit, but it is still possible to see, \cite{baginbagbook}, that in concrete systems these operators are still unbounded. This means that we cannot simply multiply, say, $a$ and $b$, since the product of unbounded operators needs not to be defined. However, in recent years, a lot of work on unbounded operator algebras has been carried out by several authors, \cite{aitbook,bagrev2007,schu,trrev}. The essential idea in all these contributions is to construct an useful framework in which the product of some unbounded operators can be defined, and the result of this product returns an element of the same structure.

There exist several possible approaches to this problem, motivated by the theory of the so called $*$-$algebras$. We invite the reader to look at \cite{aitbook} for the notion of $*$-algebra. In particular we can construct what is called a {\em quasi $*$-algebras}, (see \cite{aitbook, trrev}), or more concrete $O^*$-$algebras$ (see \cite{bagrev2007}). This is what we will do now, since it seems more useful for our particular problem. The notion of $O^*$-algebra is recalled below:

\begin{defn}\label{o*}Let $\mathcal{H}$ be a separable Hilbert space and $N_0$ an
unbounded, densely defined, self-adjoint operator. Let $D(N_0^k)$ be
the domain of the operator $N_0^k$, $k \ge 0$, and $\mathcal{D}$ the domain of
all the powers of $N_0$, that is,  $$ \mathcal{D} = D^\infty(N_0) = \bigcap_{k\geq 0}
D(N_0^k). $$ This set is dense in $\mathcal{H}$. Let us now introduce
$\mathcal{L}^\dagger(\mathcal{D})$, the $*$-algebra of all the \textit{  closable operators}
defined on $\mathcal{D}$ which, together with their adjoints, map $\mathcal{D}$ into
itself. Here the adjoint of $X\in\mathcal{L}^\dagger(\mathcal{D})$ is
$X^\dagger=X^*_{| \mathcal{D}}$. Such an $\mathcal{L}^\dagger(\mathcal{D})$ is called  $O^*$-algebra.
\end{defn}

Note that in the above definition one needs to introduce a map
which, given an element $X\in\mathcal{L}^\dagger(\mathcal{D})$, produces another
element $X^\dagger \in \mathcal{L}^\dagger(\mathcal{D})$. The most natural choice,
which is clearly $X^\dagger\equiv X^*$, is  only compatible with
$\mathcal{L}^\dagger(\mathcal{D})=B(\mathcal{H})$, i.e. with $N_0$ bounded, which is not what
we want. Recalling that $D(X^*)\supseteq\mathcal{D}$, it is clear that
$X^*_{| \mathcal{D}}$ is well defined. Further one can prove that
$\dagger$ has the properties of an involution and maps
$\mathcal{L}^\dagger (\mathcal{D})$ into itself.

In $\mathcal{D}$ the topology is defined by the following $N_0$-depending
seminorms: $$\phi \in \mathcal{D} \rightarrow \|\phi\|_n\equiv \|N_0^n\phi\|,$$
where $n \ge 0$, and  the topology $\tau_0$ in $\mathcal{L}^\dagger(\mathcal{D})$ is introduced by the seminorms
$$ X\in \mathcal{L}^\dagger(\mathcal{D}) \rightarrow \|X\|^{f,k} \equiv
\max\left\{\|f(N_0)XN_0^k\|,\|N_0^kXf(N_0)\|\right\},$$ where
$k \ge 0$ and   $f \in \mathcal{C}$, the set of all the positive,
bounded and continuous functions  on $\mathbb{R}_+$, which are
decreasing faster than any inverse power of $x$:
$\mathcal{L}^\dagger(\mathcal{D})[\tau_0]$ is a {   complete *-algebra}.

As a consequence, if $x,y\in \mathcal{L}^\dagger(\mathcal{D})$, we can multiply the two elements and the results, $xy$ and $yx$, both belong to $\mathcal{L}^\dagger(\mathcal{D})$. This is what we were looking for: a suitable structure in which we have the possibility of introducing Lie brackets for objects (linear operators, in our case), which are not everywhere defined.
In fact, for each pair $x,y\in\mathcal{L}^\dagger(\mathcal{D})$, we can define a map $[.,.]$ as follows:
\begin{equation}\label{B1}
[x,y]=xy-yx.
\end{equation}

It is clear that $[x,y]\in\mathcal{L}^\dagger(\mathcal{D})$, that it is bilinear, that $[x,x]=0$ for all $x\in\mathcal{L}^\dagger(\mathcal{D})$, and that it satisfies the Jacobi identity
$$
[x,[y,z]]+[y,[z,x]]+[z,[x,y]]=0,
$$
for all $x,y,z\in\mathcal{L}^\dagger(\mathcal{D})$. Therefore, $[.,.]$ is a Lie bracket defined on $\mathcal{L}^\dagger(\mathcal{D})$.

\begin{rem} It is good to note here that \eqref{B1} is exactly a generator of the derived subalgebra $[\mathcal{L}^\dagger(\mathcal{D}),\mathcal{L}^\dagger(\mathcal{D})]$, introduced in a pure algebraic context in Definition \ref{derived} for an arbitrary Lie algebra. We point out this fact, because it is useful to see how algebraic notions and notions of functional analysis are strongly related.
\end{rem}

For our purposes it is now convenient to identify $\mathcal{D}$ with the set $\mathcal{S}(\Bbb R)$ of the test functions introduced in (\ref{sr}).

Collecting what we discussed and proved until now, we get the following {\em constructing} result.

\begin{prop}[Construction]\

\begin{itemize}

\item[(1)] We may introduce the self-adjoint position and momentum operators, $x$ and $p$, satisfying the commutation rule $[x,p]=i\mathbb{I}$;

\item[(2)] Using $x$ and $p$, we may define $N_0$ (essentially) as the Hamiltonian of the quantum harmonic oscillator: $N_0=p^2+x^2$, as well as their linear combinations operators $c=\frac{1}{\sqrt{2}}(x+ip)$ and $c^\dagger=\frac{1}{\sqrt{2}}(x-ip)$;

\item[(3)] We construct the algebra $\mathcal{L}^\dagger(\mathcal{D})$ as in Definition \ref{o*};

\item[(4)] We observe that $c, c^\dagger\in\mathcal{L}^\dagger(\mathcal{D})$, together with $N_0=2c^\dagger c+\mathbb{I}$;

\item[(5)] We note that $c$ and $c^\dagger$ are unbounded, as well as $N_0$. Therefore, the product of $N_0$ and $c$ makes no sense in $B(\mathcal{H})$, the algebra of bounded operators on $\mathcal{H}$, but it is perfectly defined in $\mathcal{L}^\dagger(\mathcal{D})$;

\item[(6)]We note that the topology on $\mathcal{D}$ in  Definition \ref{o*} coincides with the standard topology $\tau_\mathcal{S}$ on $\mathcal{S}(\Bbb R)$  in  \eqref{sr}.

\end{itemize}
\end{prop}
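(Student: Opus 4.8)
The plan is to work in the concrete realization $\mathcal{H}=L^2(\mathbb{R})$, taking $x$ to be multiplication by the variable, $(xf)(t)=tf(t)$, and $p=-i\,\frac{d}{dt}$, both first defined on $\mathcal{S}(\mathbb{R})$. A direct application of the Leibniz rule gives, for $f\in\mathcal{S}(\mathbb{R})$,
\[
[x,p]f = t(-if') - \left(-i\tfrac{d}{dt}(tf)\right) = -itf' + i(f+tf') = if,
\]
which proves (1); essential self-adjointness of $x$ and $p$ on $\mathcal{S}(\mathbb{R})$ is classical, and for $p$ one passes through the Fourier transform, which maps $\mathcal{S}(\mathbb{R})$ onto itself and conjugates $p$ into $x$. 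For (2) I would set $N_0=p^2+x^2$ and $c=\frac{1}{\sqrt2}(x+ip)$, $c^\dagger=\frac{1}{\sqrt2}(x-ip)$; since $p$ is symmetric on $\mathcal{S}(\mathbb{R})$, the operator $c^\dagger$ is indeed the restriction to $\mathcal{D}$ of the Hilbert-space adjoint of $c$.

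The algebraic identities in (4) then follow by the same bracket computation: expanding
\[
c^\dagger c=\tfrac12(x-ip)(x+ip)=\tfrac12\bigl(x^2+p^2+i[x,p]\bigr)=\tfrac12\bigl(x^2+p^2-\mathbb{I}\bigr),
\]
so that $N_0=2c^\dagger c+\mathbb{I}$, and likewise $[c,c^\dagger]=\mathbb{I}$. The membership $c,c^\dagger\in\mathcal{L}^\dagger(\mathcal{D})$ reduces to checking that both operators, together with their adjoints, are closable and map $\mathcal{D}$ into itself; since $x$ and $p$ preserve $\mathcal{S}(\mathbb{R})$, so do $c$ and $c^\dagger$, and both are closable as members of an adjoint pair of densely defined operators. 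For the unboundedness asserted in (5), I would invoke the raising relations \eqref{A3} specialized to $b=a^\dagger=c^\dagger$: on the Hermite orthonormal basis $e_n$ one has $\|c^\dagger e_n\|=\sqrt{n+1}$ and $N_0 e_n=(2n+1)e_n$, so these norms are unbounded while $\|e_n\|=1$; the composite $N_0 c$ is nevertheless well defined on $\mathcal{D}$ because both factors stabilize $\mathcal{D}$, whereas no such composition exists inside $B(\mathcal{H})$.

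The analytic heart of the statement, and the step I expect to be the main obstacle, is the identification $\mathcal{D}=D^\infty(N_0)=\mathcal{S}(\mathbb{R})$ together with the coincidence of topologies in (6), which cannot be argued purely algebraically. The plan is to use the fact that the Hermite functions $\{e_n\}$ diagonalize $N_0$ with eigenvalues $2n+1$, so that a vector $f=\sum_n \langle e_n,f\rangle e_n$ lies in $D^\infty(N_0)$ precisely when its coefficients decay faster than any inverse power of $n$, in which case $\|N_0^k f\|^2=\sum_n (2n+1)^{2k}|\langle e_n,f\rangle|^2$. One then shows this family of rapid-decay conditions is equivalent to the Schwartz seminorm conditions in \eqref{sr}: the decisive estimates are that each seminorm $\sup_t|t^k f^{(l)}(t)|$ is dominated by a finite combination of the $\|N_0^m f\|$ and conversely, which follows by writing $x=\frac{1}{\sqrt2}(c+c^\dagger)$, $p=\frac{-i}{\sqrt2}(c-c^\dagger)$, hence every monomial $t^k\frac{d^l}{dt^l}$, as a polynomial in $c,c^\dagger$, and estimating its action on the eigenbasis. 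This is the classical characterization of $\mathcal{S}(\mathbb{R})$ as the space of smooth vectors of the harmonic oscillator; once it is in place, (3) is the mere application of Definition \ref{o*} to this $N_0$, and (6) is the resulting statement that the $N_0$-seminorms $\|N_0^n\cdot\|$ and the standard seminorms $\tau_\mathcal{S}$ generate the same Fréchet topology.
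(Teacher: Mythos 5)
Your verification is correct, but it is worth noting that the paper does not actually prove this Proposition: it is stated as a \emph{constructing} result that merely collects the preceding discussion, with the identification $\mathcal{D}=\mathcal{S}(\mathbb{R})$ introduced by declaration (``it is now convenient to identify $\mathcal{D}$ with the set $\mathcal{S}(\mathbb{R})$'') and the remaining claims left to the standard references. Your route is therefore genuinely different in that it supplies the missing analysis: the commutator and the identity $N_0=2c^\dagger c+\mathbb{I}$ are routine and agree with what the paper takes for granted, but you correctly isolate item (6) as the only nontrivial point and prove it the right way, namely via the Hermite diagonalization of $N_0$ with eigenvalues $2n+1$, the characterization of $D^\infty(N_0)$ by rapid decay of the coefficients $\langle e_n,f\rangle$, and the mutual domination of the Schwartz seminorms $\sup_t|t^kf^{(l)}(t)|$ and the seminorms $\|N_0^m f\|$ obtained by expressing $t^k\frac{d^l}{dt^l}$ as a polynomial in $c,c^\dagger$. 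This is the classical theorem that $\mathcal{S}(\mathbb{R})$ is the space of smooth vectors of the harmonic oscillator with its natural Fr\'echet topology, and it is exactly what the paper's item (6) asserts without argument. What your approach buys is self-containedness and a clear accounting of where the functional-analytic content lies; what the paper's approach buys is brevity, since all of these facts are indeed standard and available in the cited literature. One cosmetic remark: the paper uses $N_0=c^\dagger c$ in Section 3 and $N_0=p^2+x^2=2c^\dagger c+\mathbb{I}$ in the Proposition; as you implicitly use, the two differ by an affine rescaling and so generate the same domains $D(N_0^k)$ and the same graph topology, so no harm is done.
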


Let now $T$ be an invertible operator on $\mathcal{H}$ such that $T,T^{-1}\in\mathcal{L}^\dagger(\mathcal{D})$. This means, in particular, that $\mathcal{D}$ is stable under the action of $T$, $T^{-1}$, and of their adjoints. Hence we can introduce two operators, similar to $c$ and $c^\dagger$, as follows:
\begin{equation}\label{B2}
a=TcT^{-1},\qquad b=Tc^\dagger T^{-1}.
\end{equation}
It is clear that, because of our assumption on $T$, $a,b\in \mathcal{L}^\dagger(\mathcal{D})$ as well. The adjoints of $a$ and $b$ can be easily computed,
$$
a^\dagger=(T^{-1})^\dagger c^\dagger T^\dagger, \qquad b^\dagger=(T^{-1})^\dagger c T^\dagger,
$$
and they both belong to $\mathcal{L}^\dagger(\mathcal{D})$ again. Moreover, it is easy to see that both the operators $a$ and $b$ satisfy Definition \ref{def21}: they both map, with their adjoints, $\mathcal{D}$ into $\mathcal{D}$ and, taken any $f\in\mathcal{D}$, $abf-baf=f$. It is also clear that Assumptions $\mathcal{D}$-pb 1. and $\mathcal{D}$-pb 2. are both satisfied. In fact, the function $e_0(x)=\frac{1}{\pi^{1/4}}\,e^{-x^2/2}$ belongs to $\mathcal{S}(\Bbb R)$ and is annihilated by $c$. Since $T$ maps $\mathcal{S}(\Bbb R)$ into itself, and since it is invertible, it is clear that the nonzero function $\varphi_0(x)=Te_0(x)$ belongs to $\mathcal{S}(\Bbb R)$ and is annihilated by $a$. Analogously the nonzero function $\Psi_0(x)=(T^\dagger)^{-1}e_0(x)$ also belongs to $\mathcal{S}(\Bbb R)$ and is annihilated by $b^\dagger$.

We can then construct the functions $\varphi_n(x)$ and $\Psi_n(x)$ as we have shown before, and they all belong to $\mathcal{D}$. In general, under our assumption on $T$, we know that $\mathcal{F}_\varphi$ and $\mathcal{F}_\Psi$ are biorthogonal sets, but there is no guarantee they are bases, or Riesz bases. On the other hand, they can be easily proved to be $\mathcal{D}$-quasi bases, \cite{baginbagbook}. The conclusion is therefore that we have constructed pseudo-bosonic operators endowed by a Lie algebra structure. In the next section we will analyze more in detail this structure, considering some explicit examples.

\section{Main results}

In this section we will consider some examples of pseudo-bosonic operators introduced along the years in the literature, and we will discuss how these examples are related to some particular kind of Lie algebras. Interestingly enough, these Lie algebras will be shown to be of different kind for different physical systems, even if the functional structure is not very different in the various cases.

\begin{thm}[Decomposition of the Shifted Harmonic Oscillator]\label{main1} The pseudo-bosonic operator in \cite{bagrevA}, connected with the shifted harmonic oscillator, admits a Lie algebra structure $\mathfrak{a}_{sh}$ such that \begin{itemize}
\item[(i)] $\mathfrak{a}_{sh}= \mathfrak{a} + \mathfrak{b}$ is a semidirect sum of two abelian Lie algebras $\mathfrak{a}$ and $\mathfrak{b}$;
\item[(ii)] $\mathrm{dim} \ \mathfrak{a}_{sh} =5$, $\mathrm{dim} \ \mathfrak{a} =3$ and $\mathrm{dim} \ \mathfrak{b} =2$;
\item[(iii)] $[\mathfrak{a}_{sh},\mathfrak{a}_{sh}]=Z(\mathfrak{a}_{sh})$ has dimension 1 and $\mathfrak{a}_{sh}/Z(\mathfrak{a}_{sh})$ is an abelian Lie algebra of dimension $4$;
\item[(iv)] $\mathfrak{a}_{sh}$ and $\mathfrak{h}(2)$ are both non-abelian  Lie algebras of  class of nilpotence 2, but they are not isomorphic.
\item[(v)]$\mathfrak{a}_{sh}$ is isomorphic to $\mathfrak{l}_{5,2}$.
\end{itemize}
\end{thm}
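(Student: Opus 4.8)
The plan is to extract from \cite{bagrevA} the explicit ladder operators of the shifted harmonic oscillator and to determine the abstract Lie algebra they span. First I would recall that, by the construction \eqref{B2}, the operators are $a=TcT^{-1}$ and $b=Tc^\dagger T^{-1}$, so that \eqref{A1} gives the basic relation $[a,b]=\mathbb{I}$; taking adjoints in this identity yields $[a^\dagger,b^\dagger]=-\mathbb{I}$. I would then declare $\mathfrak{a}_{sh}$ to be the Lie algebra with basis $\{a,b,a^\dagger,b^\dagger,\mathbb{I}\}$ and compute the four remaining cross-brackets. For the shift in question a direct computation gives
\begin{equation*}
[a,a^\dagger]=\mathbb{I},\qquad [b,b^\dagger]=-\mathbb{I},\qquad [a,b^\dagger]=[b,a^\dagger]=0,
\end{equation*}
so that every bracket of two generators lies in $\langle\mathbb{I}\rangle$. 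Since $\mathbb{I}$ is central, this shows at once that $[\mathfrak{a}_{sh},\mathfrak{a}_{sh}]=\langle\mathbb{I}\rangle$ is one-dimensional and that $[\mathfrak{a}_{sh},[\mathfrak{a}_{sh},\mathfrak{a}_{sh}]]=0$; hence $\mathfrak{a}_{sh}$ is nilpotent of class $2$ and $\mathfrak{a}_{sh}/[\mathfrak{a}_{sh},\mathfrak{a}_{sh}]$ is abelian of dimension $4$, which yields the assertion on the derived subalgebra in (iii) and the nilpotency half of (iv).

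Next I would pin down the isomorphism type claimed in (v). Solving $[x,\,\cdot\,]=0$ against each generator, the centre is seen to be spanned by $\mathbb{I}$ together with $a-b^\dagger$ and $b-a^\dagger$, so $\dim Z(\mathfrak{a}_{sh})=3$. Introducing the basis
\begin{equation*}
v_1=a,\quad v_2=b,\quad v_3=\mathbb{I},\quad v_4=a-b^\dagger,\quad v_5=b-a^\dagger,
\end{equation*}
one checks directly that the only non-vanishing bracket is $[v_1,v_2]=v_3$, while $v_3,v_4,v_5$ are central and $\{v_1,\dots,v_5\}$ is again a basis (one recovers $a^\dagger=v_2-v_5$ and $b^\dagger=v_1-v_4$). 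This is exactly the presentation of $\mathfrak{h}(1)\oplus\mathfrak{i}\oplus\mathfrak{i}$, so Theorem \ref{classification} gives $\mathfrak{a}_{sh}\simeq\mathfrak{l}_{5,2}$, proving (v).

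Parts (i) and (ii) then follow in the manner of Proposition \ref{semi-heisenberg} and Corollary \ref{trick}. I would put $\mathfrak{a}=\langle v_1,v_3,v_4\rangle=\langle a,b^\dagger,\mathbb{I}\rangle$ and $\mathfrak{b}=\langle v_2,v_5\rangle=\langle b,a^\dagger\rangle$. Both are abelian, since each spanning element is either central or brackets trivially with its partner; $\mathfrak{a}$ is an ideal because $[\mathfrak{a},\mathfrak{a}_{sh}]\subseteq\langle\mathbb{I}\rangle\subseteq\mathfrak{a}$; and visibly $\mathfrak{a}\cap\mathfrak{b}=0$ with $\mathfrak{a}+\mathfrak{b}=\mathfrak{a}_{sh}$, so Definition \ref{semidirect-def} holds with $\dim\mathfrak{a}=3$ and $\dim\mathfrak{b}=2$. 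For the non-isomorphism in (iv) I would use the centre as the separating invariant: by Definition \ref{Heisenberg} the algebra $\mathfrak{h}(2)=\mathfrak{l}_{5,4}$ has $Z(\mathfrak{h}(2))=[\mathfrak{h}(2),\mathfrak{h}(2)]$ of dimension $1$, whereas $\dim Z(\mathfrak{a}_{sh})=3$; since isomorphic Lie algebras have centres of equal dimension, $\mathfrak{a}_{sh}\not\simeq\mathfrak{h}(2)$, although both are non-abelian and nilpotent of class $2$.

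The step I expect to be the main obstacle is the verification that all four cross-brackets genuinely close inside $\langle\mathbb{I}\rangle$ rather than producing new operators, since this is precisely what forces the class-$2$ nilpotency and is sensitive to the particular shift fixed in \cite{bagrevA}. A related point demanding care is that the five operators be treated as an abstract basis carrying the computed structure constants, because the invariant separating $\mathfrak{l}_{5,2}$ from $\mathfrak{h}(2)$ is the three-dimensional centre: it is the derived subalgebra $[\mathfrak{a}_{sh},\mathfrak{a}_{sh}]$ that is one-dimensional, while $Z(\mathfrak{a}_{sh})$ strictly contains it, and this distinction must be kept in view when reading the equality stated in (iii).
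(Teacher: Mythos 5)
Your proof is correct in substance and, for items (i) and (ii), follows essentially the same route as the paper: your cross-brackets $[a,a^\dagger]=\mathbb{I}$, $[b,b^\dagger]=-\mathbb{I}$, $[a,b^\dagger]=[b,a^\dagger]=0$ are exactly the relations \eqref{C2} that the paper derives from the explicit form $a=c-\alpha\mathbb{I}$, $b=c^\dagger-\overline{\beta}\mathbb{I}$ of \cite{bagrevA} (you should quote these explicit operators before asserting the brackets, since, as you note, everything hinges on them; the values you assert are the right ones), and your $\mathfrak{a}=\langle a,b^\dagger,\mathbb{I}\rangle$, $\mathfrak{b}=\langle b,a^\dagger\rangle$ is literally the paper's decomposition \eqref{AandB1}, rewritten in your adapted basis. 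Where you genuinely diverge is in (iv) and (v): the paper settles (v) by appeal to GAP and settles (iv) by remarking that the extra relation $[v_1,v_4]=v$ makes an isomorphism with $\mathfrak{h}(2)$ impossible, which is not by itself a proof (extra relations in a presentation do not preclude an isomorphism after a change of basis). Your explicit new basis with $a-b^\dagger$ and $b-a^\dagger$ central, exhibiting $\mathfrak{h}(1)\oplus\mathfrak{i}\oplus\mathfrak{i}$ directly, and your use of $\dim Z$ as the separating invariant, are cleaner and self-contained; this is what each approach buys, a computer check versus an explicit isomorphism plus an invariant.

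Your centre computation, however, exposes a genuine defect in item (iii) of the statement and in the corresponding step of the paper's proof. The paper asserts in \eqref{C3} that $Z(\mathfrak{a}_{sh})=\{\lambda v\}$ is one-dimensional and equals $[\mathfrak{a}_{sh},\mathfrak{a}_{sh}]$; but with the paper's labels $v_1=a$, $v_2=b$, $v_3=b^\dagger$, $v_4=a^\dagger$ one has $[v_1-v_3,v_2]=v-v=0$ and $[v_1-v_3,v_4]=v-v=0$, so $a-b^\dagger$ (and likewise $b-a^\dagger$) is central and $\dim Z(\mathfrak{a}_{sh})=3$, exactly as you found. Indeed (iii) as stated is incompatible with (v), since $\mathfrak{l}_{5,2}\simeq\mathfrak{h}(1)\oplus\mathfrak{i}\oplus\mathfrak{i}$ has three-dimensional centre. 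The correct assertion, which is what you actually prove, is that $[\mathfrak{a}_{sh},\mathfrak{a}_{sh}]=\langle v\rangle$ is one-dimensional, is properly contained in $Z(\mathfrak{a}_{sh})$, and $\mathfrak{a}_{sh}/[\mathfrak{a}_{sh},\mathfrak{a}_{sh}]$ is abelian of dimension $4$; nilpotency of class $2$ survives unchanged. One further caveat applying to both your write-up and the paper's: as concrete operators $a-b^\dagger=(\beta-\alpha)\mathbb{I}$, so the five operators span only a three-dimensional space inside $\mathcal{L}^\dagger(\mathcal{D})$, and the five-dimensional $\mathfrak{a}_{sh}$ exists only as the abstract Lie algebra presented on the symbols $v_1,\dots,v_4,v$ with the computed structure constants. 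You flag this correctly; it deserves to be made explicit.
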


\begin{proof}
Following the terminology and the notations we introduced in the previous sections, we may define
\begin{equation}\label{C1}
a=c-\alpha\mathbb{I},\qquad b=c^\dagger-\overline{\beta}\mathbb{I},
\end{equation}
where $\alpha$ and $\beta$ are different complex quantities such that $\alpha\neq\beta$. Hence $b\neq a^\dagger$, clearly. In \cite{bagrevA}, Section III,
it is shown that $a$ and $b$ can be written as in \eqref{B2}, for a suitable invertible operator $T$.

To check that $a$ and $b$ belong to $\mathcal{L}^\dagger(\mathcal{D})$, rather than checking if $T$ and $T^{-1}$ also belong to $\mathcal{L}^\dagger(\mathcal{D})$, it is much easier to consider the explicit form in \eqref{C1} of these operators: since $c,c^\dagger$ and $\mathbb{I}$ belong to $\mathcal{L}^\dagger(\mathcal{D})$, and since this is a linear vector space, it is clear that $a,b\in \mathcal{L}^\dagger(\mathcal{D})$ as well.  This argument could also be repeated for $a^\dagger$ and $b^\dagger$.

Hence we can use the general settings described in the previous section. For that we define $v_1=a$, $v_2=b$, $v_3=b^\dagger$, $v_4=a^\dagger$ and $v=\mathbb{I}$, and consider the Lie algebra.
\begin{equation}\label{C2}
\mathfrak{a}_{sh} = \langle v_1, v_2, v_3, v_4, v \  |  \  [v_1,v_2]=[v_3,v_4]=[v_1,v_4]=[v_3,v_2]=v,
\end{equation}
$$[v_1,v_3]=[v_2,v_4]=[v,v_j]=0 \quad \forall j=1,2,3,4\rangle.$$
Note that the Lie bracket, involved in the above presentation of $\mathfrak{a}_{sh}$, are exactly the rules which we have introduced, and commented, in \eqref{B1}.

We proceed to show (i). Define
\begin{equation}\label{AandB1}
\mathfrak{a} = \langle v_1,v_3,v \ | \ [v_1,v_3]=[v,v_1]= [v,v_3]=0 \rangle \ \mathrm{and} \ \mathfrak{b} = \langle v_2, v_4 \ | \ [v_2,v_4]=0 \rangle.
\end{equation}
Equivalently, $\mathfrak{a}= \langle a,b^\dagger,\mathbb{I} \ | \ [a,b^\dagger]=[a, \mathbb{I}]=[b^\dagger, \mathbb{I}]=0 \rangle$ and $\mathfrak{b} = \langle a^\dagger, b \ | \ [a^\dagger, b]=0 \rangle$.
Using the commutation rules in \eqref{C2}, it is easy to check that both $\mathfrak{a}$ and $\mathfrak{b}$ are Lie subalgebras of $\mathfrak{a}_{sh}$, that $\mathfrak{a}+\mathfrak{b}=\mathfrak{a}_{sh}$, and that $\mathfrak{a}\cap\mathfrak{b}=0$. Moreover, $[\mathfrak{a}, \mathfrak{a}_{sh}] \subseteq \mathfrak{a}$, so $\mathfrak{a}$ is an ideal of $\mathfrak{a}_{sh}$, while $[\mathfrak{b}, \mathfrak{a}_{sh}] \subseteq \langle v \rangle$ and $\langle v \rangle \not \subseteq \mathfrak{b}$, so   $\mathfrak{b}$ is not an ideal in $\mathfrak{a}_{sh}$. We may conclude that the conditions of Definition \ref{semidirect-def} are satisfied. In addition, the commutation relations show that both $\mathfrak{a}$ and $\mathfrak{b}$  are abelian. Therefore  (i) follows.

Statement (ii)  follows from \eqref{C2} and \eqref{AandB1}.

We proceed to show (iii). It is an easy computation to check that the center of $\mathfrak{a}_{sh}$, $Z(\mathfrak{a}_{sh})$, is only made by the multiples of $v$, and that the same is true for  $[\mathfrak{a}_{sh},\mathfrak{a}_{sh}]$, and so
\begin{equation}\label{C3}
Z(\mathfrak{a}_{sh})=[\mathfrak{a}_{sh},\mathfrak{a}_{sh}]=\{\lambda v,\,\lambda\in\Bbb C\} =  \langle v \rangle,
\end{equation}
is an abelian Lie algebra of dimension one. On the other hand, we may look at the upper central series of $\mathfrak{a}_{sh}$ and we discover that it stops after  two steps, because
$$\frac{\mathfrak{a}_{sh}}{Z(\mathfrak{a}_{sh})}=\langle v_1 + Z(\mathfrak{a}_{sh}), v_2 + Z(\mathfrak{a}_{sh}), v_3 + Z(\mathfrak{a}_{sh}), v_4 + Z(\mathfrak{a}_{sh}) \rangle $$
and $$[v_i + Z(\mathfrak{a}_{sh}), v_j + Z(\mathfrak{a}_{sh})] \subseteq Z(\mathfrak{a}_{sh}) \quad \forall i,j \in \{1,2,3,4\},$$
so $$\frac{\mathfrak{a}_{sh}}{Z(\mathfrak{a}_{sh})}=Z\left(\frac{\mathfrak{a}_{sh}}{Z(\mathfrak{a}_{sh})}\right)=\frac{Z_2(\mathfrak{a}_{sh})}{Z(\mathfrak{a}_{sh})}$$
is an abelian Lie algebra of dimension 4. Then (iii) follows.

We proceed to show (iv). Due to (ii) and Definition  \ref{Heisenberg}, if  $\mathfrak{a}_{sh}$   would be isomorphic to $\mathfrak{h}(m)$ for some $m$, then $m =2$ and so $\mathfrak{a}_{sh}$ would be isomorphic to  $\mathfrak{h}(2)$, but the presence of the nontrivial relation $[v_1,v_4]=v$ makes this impossible and so we may conclude that $\mathfrak{a}_{sh}$ cannot be isomorphic to $\mathfrak{h}(2)$. On the other hand, the reader may refer to \cite{nr1, nr2}, or do a direct computation via the rules in Definition \ref{Heisenberg}, in order to see that $\mathfrak{h}(2)$ is always a nonabelian  nilpotent Lie algebra of class of nilpotence 2.

In order to show (v), we may use \cite{gap} and check that the presentation of $\mathfrak{a}_{sh}$ is equivalent to that of $\mathfrak{l}_{5,2}$, involved in Theorem \ref{classification}.

\end{proof}

This is not the only situation in which a Lie algebra can be related to pseudo-bosons. In \cite{baglat} a two-dimensional model originally proposed in \cite{benjon} has been considered, showing that the Hamiltonian of the system can be written as a sum of two non self-adjoint operators proportional to two independent number-like operators of the kind introduced before. Here we show how a one-dimensional version of the same model fits well in our discussion, and in particular that the pseudo-bosonic operators produce a semidirect product of two algebras of the same kind of that we have discussed in Theorem \ref{main1}.

\begin{thm}[Decomposition of the Deformed Position and Momentum Operators]\label{main2}
The pseudo-bosonic operators in \cite{baglat} and \cite{benjon}   admit a Lie algebra structure $\mathfrak{b}$, which is isomorphic to the nonabelian nilpotent Lie algebra $\mathfrak{a}_{sh}$ of class 2 in \eqref{C2}.
\end{thm}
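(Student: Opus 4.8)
The plan is to place the model of \cite{baglat, benjon} into exactly the framework of Theorem \ref{main1} and then compare structure constants with the presentation \eqref{C2}. First I would recall from \cite{baglat, benjon} the explicit one-dimensional deformed position and momentum operators and the resulting pseudo-bosonic pair $(a,b)$. As in \eqref{C1}, these are affine combinations of $c$ and $c^\dagger$ (equivalently of $x$ and $p$) with complex coefficients tuned so that $b \neq a^\dagger$ while the identity $abf - baf = f$ still holds on $\mathcal{D} = \mathcal{S}(\mathbb{R})$. Since $c, c^\dagger, \mathbb{I} \in \mathcal{L}^\dagger(\mathcal{D})$ and this set is a vector space, $a, b, a^\dagger, b^\dagger$ again belong to $\mathcal{L}^\dagger(\mathcal{D})$, so the bracket \eqref{B1} is defined on $\mathfrak{b} = \langle a, b, a^\dagger, b^\dagger, \mathbb{I}\rangle$ and turns it into a Lie algebra, precisely as in the proof of Theorem \ref{main1}.

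Next I would compute the brackets of the generators. The decisive observation is that the commutator of any two operators affine in $c$ and $c^\dagger$ is a scalar multiple of $[c,c^\dagger] = \mathbb{I}$, so every bracket of $\mathfrak{b}$ lands in $\langle\mathbb{I}\rangle$; hence $\mathbb{I}$ is central, $[\mathfrak{b},\mathfrak{b}] = \langle\mathbb{I}\rangle$ is one-dimensional, and $\mathfrak{b}$ is nilpotent of class $2$. Taking the adjoint of the pseudo-bosonic relation gives $[a,b] = [b^\dagger, a^\dagger] = \mathbb{I}$ at once, while $[a,a^\dagger]$ and $[b,b^\dagger]$ are self-adjoint, hence real multiples of $\mathbb{I}$. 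Setting $v_1 = a$, $v_2 = b$, $v_3 = b^\dagger$, $v_4 = a^\dagger$, $v = \mathbb{I}$ and reading off the four remaining brackets from the explicit coefficients, I would then either match \eqref{C2} directly (rescaling the generators to normalise the nonzero structure constants when needed), so that $v_i \mapsto v_i$ is an isomorphism $\mathfrak{b} \to \mathfrak{a}_{sh}$, or, if the raw coefficients do not cooperate, fall back on the invariant-theoretic argument below.

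The step I expect to be the genuine obstacle is ruling out the Heisenberg algebra $\mathfrak{h}(2) = \mathfrak{l}_{5,4}$: both it and $\mathfrak{a}_{sh} \simeq \mathfrak{l}_{5,2}$ are five-dimensional, nilpotent of class $2$, with a one-dimensional derived subalgebra, so they cannot be separated by these coarse invariants alone (cf. Theorem \ref{main1}(iv)). The clean way to settle this is to regard the bracket as an alternating form $\omega$ on the quotient $\mathfrak{b}/\langle\mathbb{I}\rangle$ with values in $\langle\mathbb{I}\rangle \simeq \mathbb{C}$: it is the pullback of the nondegenerate area form on the two-dimensional plane of $(c,c^\dagger)$-coefficients along the map sending each generator to its coefficient vector. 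Because the model is one-mode, all four vectors $a, b, a^\dagger, b^\dagger$ live in this single plane, forcing $\operatorname{rank}\omega \le 2$, while $\omega(a,b) = 1 \neq 0$ forces $\operatorname{rank}\omega = 2$; the rank-$4$ form that would yield $\mathfrak{h}(2)$ is therefore impossible, and this is exactly the feature distinguishing the present one-dimensional model from the genuinely two-mode model of \cite{benjon, baglat}. With $\operatorname{rank}\omega = 2$ pinned down, the only five-dimensional nilpotent Lie algebra of class $2$ with one-dimensional derived subalgebra and a rank-$2$ form appearing in Theorem \ref{classification} is $\mathfrak{l}_{5,2}$, so $\mathfrak{b} \simeq \mathfrak{l}_{5,2} \simeq \mathfrak{a}_{sh}$ by Theorem \ref{main1}(v); transporting the decomposition of Theorem \ref{main1}(i) through this isomorphism then realises $\mathfrak{b}$ as the announced semidirect sum of two abelian factors. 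As in the proof of Theorem \ref{main1}(v), the final comparison of presentations may alternatively be discharged by a direct computation with \cite{gap}.
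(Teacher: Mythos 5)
Your proposal is correct and takes essentially the same route as the paper: one writes $a,b,a^\dagger,b^\dagger$ as affine combinations of $c$ and $c^\dagger$ (hence elements of $\mathcal{L}^\dagger(\mathcal{D})$), sets $v_1=a$, $v_2=b$, $v_3=b^\dagger$, $v_4=a^\dagger$, $v=\mathbb{I}$, and checks that the brackets reproduce \eqref{C2} verbatim. Since the structure constants come out literally equal to those of $\mathfrak{a}_{sh}$ (every nonzero bracket equals $v$, and $[v_1,v_3]=[v_2,v_4]=0$), no rescaling is required, and your invariant-theoretic fallback via the rank of the induced alternating form --- while sound and a correct way to exclude $\mathfrak{h}(2)$ --- is not needed.
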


\begin{proof}Following \cite{baglat} we consider the self-adjoint operators $x$ and $p$ already introduced before, and we use them to introduce two new operators as follows: $P=p$ and $Q=x+i\alpha$, where $\alpha$ is a real constant. It is clear that $P=P^\dagger$, while $Q\neq Q^\dagger$. Also, the essential commutation rule is preserved: $[Q,P]=i\mathbb{I}$ (in the sense of unbounded operators). Then, for each fixed strictly positive $\beta$, we introduce the operators
\begin{equation}\label{C8}
a=\frac{1}{\sqrt{2\beta}}\left(iP+\beta Q\right), \qquad b=\frac{1}{\sqrt{2\beta}}\left(-iP+\beta Q\right),
\end{equation}
together with $a^\dagger$ and $b^\dagger$. In terms of $x$ and $p$, these operators can be written as follows:
$$
a=\frac{1}{\sqrt{2\beta}}\left(ip+\beta q+i\alpha\beta\right), \qquad b=\frac{1}{\sqrt{2\beta}}\left(-ip+\beta q+i\alpha\beta\right)
$$
and
$$
a^\dagger=\frac{1}{\sqrt{2\beta}}\left(-ip+\beta q-i\alpha\beta\right), \qquad b^\dagger=\frac{1}{\sqrt{2\beta}}\left(ip+\beta q-i\alpha\beta\right).
$$
Since $p=\frac{1}{\sqrt{2}\,i}(c-c^\dagger)$ and $q=\frac{1}{\sqrt{2}}(c+c^\dagger)$, all these operators can be written as linear combinations of $c$ and $c^\dagger$. Hence, they all belong to $\mathcal{L}^\dagger(\mathcal{D})$. Now, if we define $v_j$ and $v$ as in the proof of Theorem \ref{main1}, we easily recover that they satisfy the same Lie bracket as in \eqref{C2}. Hence, with the same arguments, we deduce that its algebra $\mathfrak{b}$ is isomorphic to $\mathfrak{a}_{sh}$. The result follows.
\end{proof}

We end with our final result, in which a different algebraic structure is involved. This model was originally considered in \cite{swans} and, later in \cite{bagswans}.

\begin{thm}[Decomposition of the Operators for Swanson model]\label{main3}
The pseudo-bosonic operators in \cite{swans}   admit a Lie algebra structure $\mathfrak{s}$ such that
\begin{itemize}
\item[(i)] $\mathfrak{s}= \mathfrak{a} + \mathfrak{b}$ is sum of two  Lie algebras $\mathfrak{a}$ and $\mathfrak{b}$;
\item[(ii)] The sum in (i) above is not semidirect;
\item[(iii)] $\mathfrak{a}$ is abelian and  $\mathfrak{b}$ is not abelian, but both are ideals of $\mathfrak{s}$;
\item[(iv)] $\mathrm{dim} \ \mathfrak{s}=5$, $\mathrm{dim} \ \mathfrak{a} =2$, $\mathrm{dim} \ \mathfrak{b} =4$ and $\mathrm{dim} \ \mathfrak{a} \cap   \mathfrak{b} =1$;
\item[(v)] $\mathfrak{s}$ is a nonabelian  nilpotent Lie algebra of class 2 with $[\mathfrak{s},\mathfrak{s}]=Z(\mathfrak{s})$ of dimension 1 and $\mathfrak{s}/Z(\mathfrak{s})$ is abelian of dimension $4$;
\item[(vi)] $\mathfrak{s}$ is isomorphic neither to $\mathfrak{h}(2)$ nor to $\mathfrak{a}_{sh}$.
\end{itemize}
\end{thm}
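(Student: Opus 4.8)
The plan is to mirror the strategy used for Theorem \ref{main1}. First I would write down the Swanson pseudo-bosonic operators $a$ and $b$ from \cite{swans,bagswans} as explicit linear combinations $a=\alpha\,c+\beta\,c^\dagger$ and $b=\gamma\,c+\delta\,c^\dagger$ of the standard ladder operators, with the coefficients constrained so that $\alpha\delta-\beta\gamma=1$ and hence $[a,b]=\mathbb{I}$ as required by Definition \ref{def21}; then $a^\dagger$ and $b^\dagger$ are again combinations of $c$ and $c^\dagger$, and since $c,c^\dagger,\mathbb{I}\in\mathcal{L}^\dagger(\mathcal{D})$ and this is a vector space, all of $a,b,a^\dagger,b^\dagger,\mathbb{I}$ lie in $\mathcal{L}^\dagger(\mathcal{D})$. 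Setting $v_1=a$, $v_2=b$, $v_3=b^\dagger$, $v_4=a^\dagger$ and $v=\mathbb{I}$ exactly as in the proof of Theorem \ref{main1}, the next step is the bookkeeping computation of the six brackets $[v_i,v_j]$, $1\le i<j\le 4$. Because $[c,c^\dagger]=\mathbb{I}$, each of these is a scalar multiple of $v$, and this yields the presentation of $\mathfrak{s}$ together with its structure constants. The qualitative point I expect to emerge is that, in contrast with $\mathfrak{a}_{sh}$, where $[a,b^\dagger]=[b,a^\dagger]=0$, the Swanson coefficients render these mixed brackets nonzero, and this is precisely what will prevent the shifted-oscillator style decomposition from being available here.

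To obtain (i)--(iv) I would exhibit the concrete factors $\mathfrak{a}=\langle v_1,v\rangle=\langle a,\mathbb{I}\rangle$ and $\mathfrak{b}=\langle v_2,v_3,v_4,v\rangle=\langle b,b^\dagger,a^\dagger,\mathbb{I}\rangle$. Since $v$ is central, $\mathfrak{a}$ is abelian of dimension $2$; since every bracket lies in $\langle v\rangle\subseteq\mathfrak{a}\cap\mathfrak{b}$, both $\mathfrak{a}$ and $\mathfrak{b}$ are automatically ideals, whereas $[v_3,v_4]=v\neq 0$ shows $\mathfrak{b}$ is non-abelian of dimension $4$. Here I would stress that the analogue of the factor used for $\mathfrak{a}_{sh}$, namely $\langle a,b^\dagger,\mathbb{I}\rangle$, is no longer abelian because $[a,b^\dagger]\neq 0$, which is why one is led to this particular decomposition. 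A dimension count gives $\mathfrak{a}+\mathfrak{b}=\mathfrak{s}$ and $\mathfrak{a}\cap\mathfrak{b}=\langle v\rangle$, whence $\dim\mathfrak{s}=2+4-1=5$, establishing (iv) and (iii). Statement (ii) is then immediate: because $\mathfrak{a}\cap\mathfrak{b}=\langle v\rangle\neq 0$, condition (iii) of Definition \ref{semidirect-def} fails, so this sum cannot be semidirect.

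For (v) the plan is to read $[\mathfrak{s},\mathfrak{s}]$ and $Z(\mathfrak{s})$ directly off the bracket table. The presence of a nonzero bracket together with the fact that all brackets lie in $\langle v\rangle$ gives $[\mathfrak{s},\mathfrak{s}]=\langle v\rangle$, of dimension one, and the same table shows that every bracket of generators lands in $Z(\mathfrak{s})$, so that $\mathfrak{s}/Z(\mathfrak{s})$ is abelian and the upper central series of Definition \ref{uppercentralseries} stops after two steps; thus $\mathfrak{s}$ is nilpotent of class $2$. Finally, for (vi) I would argue by invariants, comparing $\mathfrak{s}$ against $\mathfrak{h}(2)$ and $\mathfrak{a}_{sh}$ through the dimension of the center and the isomorphism type furnished by the classification in Theorem \ref{classification}, in the same spirit as part (v) of Theorem \ref{main1}, where $\mathfrak{a}_{sh}$ was matched with $\mathfrak{l}_{5,2}$.

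The step I expect to be the genuine obstacle is precisely this last one, namely making (v) and (vi) cohere. The delicacy is that ``derived subalgebra equal to the center'' together with $\dim[\mathfrak{s},\mathfrak{s}]=1$ are exactly the defining features of a Heisenberg algebra in Definition \ref{Heisenberg}, so those two invariants alone cannot separate $\mathfrak{s}$ from $\mathfrak{h}(2)$; the argument must instead determine the correct value of $\dim Z(\mathfrak{s})$ by solving the linear system coming from the structure constants, and then locate the correct member of the list of Theorem \ref{classification}. The subtle issue throughout is to keep careful track of whether one works with the abstract Lie algebra defined by the presentation or with its unbounded-operator realization inside $\mathcal{L}^\dagger(\mathcal{D})$, since in the latter the operators $a,b,a^\dagger,b^\dagger$ may satisfy additional linear relations that change the computation of the center. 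Carrying out that identification rigorously, and verifying that it is compatible with the decomposition found in (i)--(iv), is where the real work of the proof lies.
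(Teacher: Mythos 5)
For items (i)--(iv) your route coincides with the paper's: the same generators $v_1=a$, $v_2=b$, $v_3=b^\dagger$, $v_4=a^\dagger$, $v=\mathbb{I}$, the same observation that everything lies in $\mathcal{L}^\dagger(\mathcal{D})$ because all four operators are linear combinations of $c$ and $c^\dagger$, the same bracket table (the paper's \eqref{C5}: $[v_1,v_2]=[v_3,v_4]=v$, $[v_1,v_4]=[v_3,v_2]=\cos2\theta\,v$, $[v_1,v_3]=[v_4,v_2]=-i\sin2\theta\,v$), and the same factors $\mathfrak{a}=\langle v_1,v\rangle$ and $\mathfrak{b}=\langle v_2,v_3,v_4,v\rangle$. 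Your argument that both factors are ideals because every bracket lands in $\langle v\rangle\subseteq\mathfrak{a}\cap\mathfrak{b}$, and that the sum is not semidirect because $\mathfrak{a}\cap\mathfrak{b}=\langle v\rangle\neq 0$, is exactly the paper's argument, and your remark that the nonvanishing of the mixed brackets $[a,b^\dagger]$, $[a^\dagger,b]$ is what blocks the $\mathfrak{a}_{sh}$-style decomposition correctly identifies the relevant difference from Theorem \ref{main1}.

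The gap is that you never actually prove (v) and (vi): you only observe that everything hinges on the value of $\dim Z(\mathfrak{s})$ and defer the linear-system computation as ``where the real work lies.'' You are right that this is the crux, and you should know that carrying it out does not close the gap, because the system is degenerate. Writing $x=\sum_j x_jv_j+x_0v$ and imposing $[x,v_k]=0$ for $k=1,\dots,4$ gives a $4\times4$ antisymmetric system whose determinant vanishes identically (its Pfaffian is $1-\cos^2 2\theta-\sin^2 2\theta=0$), with a two-dimensional solution space. Concretely, $w=-\cos2\theta\,v_1+i\sin2\theta\,v_2+v_3$ and $w'=i\sin2\theta\,v_1-\cos2\theta\,v_2+v_4$ commute with all generators, so for the abstract algebra presented by \eqref{C5} one has $Z(\mathfrak{s})=\langle v,w,w'\rangle$ of dimension $3$, not $1$ as asserted in \eqref{C7}; in the basis $v_1,v_2,w,w',v$ the only surviving relation is $[v_1,v_2]=v$, whence $\mathfrak{s}\simeq\mathfrak{h}(1)\oplus\mathfrak{i}\oplus\mathfrak{i}\simeq\mathfrak{l}_{5,2}\simeq\mathfrak{a}_{sh}$. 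This contradicts the claim $[\mathfrak{s},\mathfrak{s}]=Z(\mathfrak{s})$ in (v) and the claim of non-isomorphism in (vi); your own observation that ``derived subalgebra $=$ center of dimension one'' in dimension five would force $\mathfrak{s}\simeq\mathfrak{h}(2)$ by Definition \ref{Heisenberg} and Theorem \ref{classification} already shows that (v) and (vi) cannot both hold. The paper's proof does not detect this because it asserts \eqref{C7} as ``easy to check'' and settles (vi) by inspection of the presentation, which is not an isomorphism invariant. Your final caveat about the operator realization is also substantive rather than merely presentational: as operators, $v_1,\dots,v_4$ span only the two-dimensional space $\langle c,c^\dagger\rangle$ (indeed $w=w'=0$ as operators), so the concrete Lie algebra is the three-dimensional $\mathfrak{h}(1)$. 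Any completion of your proposal must therefore either correct (v)--(vi) or work with a different algebra than the one presented by \eqref{C5}.
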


\begin{proof} Following \cite{swans, bagswans}, we may define operators $a$ and $b$ in terms of a third, purely bosonic, operator $c$ and of its adjoint $c^\dagger$ as follows:
\begin{equation}\label{C4}
a=\cos\theta\,c+i\sin\theta\, c^\dagger, \qquad b=\cos\theta\,c^\dagger+i\sin\theta\, c,
\end{equation}
so that their adjoints are
$$a^\dagger=\cos\theta\,c^\dagger-i\sin\theta\, c \qquad   \mathrm{and}  \qquad  b^\dagger=\cos\theta\,c-i\sin\theta\, c^\dagger.$$ Here $\theta$ is a parameter which belongs to the set $\left(-\frac{\pi}{4},\frac{\pi}{4}\right)\setminus\{0\}$. The reason why $\theta$ is required to be different from zero is clear: if $\theta=0$, then $a=c$ and $b=c^\dagger$, so we return to ordinary CCR. More subtle is the reason why $|\theta|$ cannot be larger or equal than $\frac{\pi}{4}$: from a mathematical point of view, in fact, when this happens the eigenstates of the number operators $N$ and $N^\dagger$ would not be square-integrable, \cite{bagswans}, Section 4.

They satisfy the pseudo-bosonic Definition \ref{def21}, with $\mathcal{D}=\mathcal{S}(\Bbb R)$ as before. The fact that these operators belong to $\mathcal{L}^\dagger(\mathcal{D})$ follows again from noticing that they are all linear combinations of $c$ and $c^\dagger$, which belong to $\mathcal{L}^\dagger(\mathcal{D})$, and from the fact that this algebra is closed for sums.

If we now define $v_j$, $j=1,2,3,4$, and $v$ as in the previous example, and we call $\mathfrak{s}$ the Lie algebra generated by these elements, and   we get the following commutator relations, related to the generators of $\mathfrak{s}$:
\begin{equation}\label{C5}
\mathfrak{s}= \langle v_1, v_2, v_3, v_4, v \ \ | \  \ [v_1,v_2]=[v_3,v_4]=v,  \ [v_1,v_4]=[v_3,v_2]=\cos2\theta\, v,
\end{equation}
$$[v_1,v_3]=[v_4,v_2]=-i\sin2\theta\, v,  \ [v,v_j]=0, \ \forall \ j=1,2,3,4 \rangle.$$
We see that the commutation rules differ from those in (\ref{C2}) for $[v_1,v_3]$ and $[v_4,v_2]$, which are no longer zero. Of course, this is true when $\sin2\theta\neq0$, which we will always assume here, to make the situation more interesting. However, it is easy to check that
\begin{equation}\label{C7}
Z(\mathfrak{s})=[\mathfrak{s},\mathfrak{s}]=\{\lambda v,\,\lambda\in\Bbb C\},
\end{equation}
so that they have both dimension one.

We proceed to show (i) and (ii). Define $\mathfrak{a} = \langle v_1, v \ | \ [v,v_1]=0 \rangle $ and clearly this is an abelian Lie algebra, then
\begin{equation}\label{AandB2}
\mathfrak{b} = \langle v_2, v_3, v_4, v \ | \ [v_4,v_2]=-i\sin2\theta\, v,  [v_3,v_2]=\cos2\theta\, v,  [v,v_2]=[v,v_3]=[v,v_4] =0 \rangle,
\end{equation}
which is non-abelian by the presence of nontrivial relations like $[v_4,v_2]=-i\sin2\theta\, v,$ and $[v_3,v_2]=\cos2\theta\, v $. Due to \eqref{C5} and \eqref{AandB2}, it is clear that $\mathfrak{a} + \mathfrak{b} = \mathfrak{s}$. On the other hand, $\mathfrak{a} \cap \mathfrak{b}= Z(\mathfrak{s})$ shows that the condition (iii) of Definition \ref{semidirect-def} is not satisfied. Therefore (i) and (ii) follow.

We proceed to show (iii). Looking at what we proved, it is enough to note
$$[\mathfrak{a}, \mathfrak{b}] \subseteq \mathfrak{a} \cap \mathfrak{b} = Z(\mathfrak{s})$$
and so both  $ \mathfrak{a}$ and  $\mathfrak{b}$ are ideals of  $\mathfrak{s}$, so (iii) follows.

Now (iv) is clear from what we have shown in (i), (ii) and (iii) above.

We proceed to show (v). As easy consequence of Definition \ref{derived}, we have already mentioned that $\mathfrak{s}/[\mathfrak{s},\mathfrak{s}]$ is abelian, so  is $\mathfrak{s}/Z(\mathfrak{s})$ by \eqref{C7}. Therefore $\mathfrak{s}$ is nilpotent of class 2. Moreover, the commutation relations show that any two elements of $\mathfrak{s}/Z(\mathfrak{s})$ commute modulo an element of $Z(\mathfrak{s})$ and these elements are exactly $v_1+Z(\mathfrak{s}), v_2+Z(\mathfrak{s}), v_3+Z(\mathfrak{s}), v_4+Z(\mathfrak{s})$. Then (v) follows.

Finally (vi) follows from the commutator relations in \eqref{C5}, which are not compatible with those of  $\mathfrak{h}(2)$ in Definition \ref{Heisenberg}  and with those of $\mathfrak{a}_{sh}$ in \eqref{C2}.

\end{proof}

The following notion shows why  (ii) of Theorem \ref{main3} is not in contrast with Definition \ref{semidirect-def}.

\begin{defn}\label{centralproducts}
A Lie algebra $\mathfrak{l}$ is the {\em central sum} of two of its Lie subalgebras $\mathfrak{a}$ and $\mathfrak{b}$ if  the following conditions are satisfied:
\begin{itemize}
\item[(i)]$\mathfrak{a}$  is an ideal of $\mathfrak{l}$,
\item[(ii)] $\mathfrak{l}= \mathfrak{a} + \mathfrak{b}$,
\item[(iii)] $\mathfrak{a} \cap \mathfrak{b} \subseteq Z(\mathfrak{a}) \cap Z(\mathfrak{b})$.
\end{itemize}
\end{defn}

Of course, the semidirect sum of two Lie algebras is a special case of the central sum of two Lie algebras, when the factors of the decomposition have only the zero element in common. There are various examples of central sums in \cite{goze} or in the long history of classifications of nilpotent Lie algebras by dimension. The reader may refer to \cite{goze, hofmor, knapp}, in order to see that we are dealing with a well known notion in geometry and algebra. Similar constructions involve in fact not only  Lie algebras, but even the structure of  compact groups (see \cite[Theorems 9.24, 9.39, 9.42]{hofmor} for details).
We conclude with a surprising fact in our context of study:

\begin{cor}The Lie algebra $\mathfrak{s}$ of Theorem \ref{main3} is a central sum of two of its Lie subalgebras.
\end{cor}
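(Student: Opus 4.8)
The plan is to verify directly that the two subalgebras $\mathfrak{a}$ and $\mathfrak{b}$ already constructed in the proof of Theorem \ref{main3} satisfy the three conditions of Definition \ref{centralproducts}. Most of the work has in fact been done there: conditions (i) and (ii) of Definition \ref{centralproducts} coincide verbatim with facts established in that proof, namely that $\mathfrak{a}$ is an ideal of $\mathfrak{s}$ (part of statement (iii) of Theorem \ref{main3}) and that $\mathfrak{s} = \mathfrak{a} + \mathfrak{b}$ (statement (i) of Theorem \ref{main3}). So the only genuinely new point to check is condition (iii), the inclusion $\mathfrak{a} \cap \mathfrak{b} \subseteq Z(\mathfrak{a}) \cap Z(\mathfrak{b})$.

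First I would recall from the proof of Theorem \ref{main3} that $\mathfrak{a} \cap \mathfrak{b} = Z(\mathfrak{s}) = \langle v \rangle$. It then suffices to show that $v$ lies in both $Z(\mathfrak{a})$ and $Z(\mathfrak{b})$ separately. For $\mathfrak{a} = \langle v_1, v \rangle$ this is immediate, since $\mathfrak{a}$ is abelian and hence $Z(\mathfrak{a}) = \mathfrak{a} \ni v$. For $\mathfrak{b} = \langle v_2, v_3, v_4, v \rangle$, the relations $[v, v_j] = 0$ for all $j$ recorded in \eqref{C5} show that $v$ commutes with every generator of $\mathfrak{b}$, whence $v \in Z(\mathfrak{b})$. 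Combining the two observations gives $\mathfrak{a} \cap \mathfrak{b} = \langle v \rangle \subseteq Z(\mathfrak{a}) \cap Z(\mathfrak{b})$, which is exactly condition (iii).

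With all three conditions verified, Definition \ref{centralproducts} applies and $\mathfrak{s}$ is the central sum of $\mathfrak{a}$ and $\mathfrak{b}$. There is essentially no obstacle here: the content of the corollary is almost entirely contained in Theorem \ref{main3}, and the present statement merely repackages the failure of the sum to be semidirect (because $\mathfrak{a} \cap \mathfrak{b} = Z(\mathfrak{s}) \neq 0$, so condition (iii) of Definition \ref{semidirect-def} fails) as the success of the weaker decomposition in which a nonzero central intersection is permitted. If anything requires care, it is only the confirmation that the common piece $\langle v \rangle$ is central not merely in $\mathfrak{s}$ but inside each factor $\mathfrak{a}$ and $\mathfrak{b}$ individually, which is precisely what the short computation above establishes.
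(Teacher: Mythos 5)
Your proposal is correct and follows exactly the route the paper intends: the paper's own proof simply says to take $\mathfrak{a}$ and $\mathfrak{b}$ from the proof of Theorem \ref{main3} and check the conditions of Definition \ref{centralproducts}, and you have carried out that check explicitly, with the only nontrivial point (that $\mathfrak{a}\cap\mathfrak{b}=\langle v\rangle$ lies in $Z(\mathfrak{a})\cap Z(\mathfrak{b})$) correctly handled. No further comment is needed.
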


\begin{proof}
Look at the proof of Theorem \ref{main3} and take $\mathfrak{a}$ and $\mathfrak{b}$ as described there. Then one has to check that the conditions of Definition \ref{centralproducts} are satisfied.
\end{proof}

\section{conclusions and open problems}

One can involve the  notion of \textit{Schur multiplier}, introduced by I. Schur  \cite{schur} since 1907, in order to describe the behaviour of semidirect sums between Lie algebras. Even if Schur formulated this notion originally for groups, there has been a rich production in literature, especially in the context of Lie algebras, because of the generality of  methods and techniques (see \cite{nr1, nr2, nr3}). Indeed this notion involves some homological machineries which are quite technical to describe here. The reader may refer to \cite{knapp, nr1, nr2, nr3}. In particular, given a finite dimensional Lie algebra $\mathfrak{l}$ over $\mathbb{C}$, as in our case, we may define categorically the Schur multiplier of $\mathfrak{l}$ as the second Lie algebra $ H^2(\mathfrak{l},\mathbb{C}^*)$  of cohomology with complex coefficients.

There is a growing interest in geometry and physics on the notion of Schur multiplier, since it  may influence the structure of a Lie algebra, as well as that of group, that one is using in order to describe the symmetry of a dynamical system. We will quote some recent results on Schur multipliers of Lie algebras.

\begin{thm}[See \cite{nr1}, Lemmas 2.4 and 2.5] \label{ab} \

\begin{itemize}
\item[(i)]$\mathrm{dim} \ (H^2(\mathfrak{h}(1),\mathbb{C}^*))=2$.
\item[(ii)]$\mathrm{dim} \ (H^2(\mathfrak{h}(m),\mathbb{C}^*)) =2m^2-m-1$ for all $m\geq 2$.
\item[(iii)]A Lie algebra $\mathfrak{l}$ of dimension $n$ is abelian if and only if
$$\mathrm{dim}(H^2(\mathfrak{l},\mathbb{C}^*))=\frac{1}{2}n(n-1).
$$
\end{itemize}
\end{thm}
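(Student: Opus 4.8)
The plan is to identify the Schur multiplier with the second Chevalley--Eilenberg cohomology $H^2(\mathfrak{l},\mathbb{C})$ taken with trivial coefficients, and to compute it directly from the cochain complex
$$
\mathfrak{l}^*\xrightarrow{\ d_1\ }\Lambda^2\mathfrak{l}^*\xrightarrow{\ d_2\ }\Lambda^3\mathfrak{l}^*,
$$
with $(d_1\phi)(x,y)=-\phi([x,y])$ and $(d_2\omega)(x,y,z)=-\bigl(\omega([x,y],z)+\omega([y,z],x)+\omega([z,x],y)\bigr)$, so that $H^2(\mathfrak{l},\mathbb{C})=Z^2/B^2$ where $Z^2=\ker d_2$ and $B^2=\operatorname{im}d_1$. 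First I would record a fact valid for every finite dimensional $\mathfrak{l}$: the kernel of $d_1$ is exactly the annihilator of $[\mathfrak{l},\mathfrak{l}]$, so rank--nullity yields $\dim B^2=\dim[\mathfrak{l},\mathfrak{l}]$ and hence
$$
\dim H^2(\mathfrak{l},\mathbb{C})=\dim Z^2-\dim[\mathfrak{l},\mathfrak{l}]\le\dim\Lambda^2\mathfrak{l}^*=\binom{n}{2}.
$$

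For statement (iii) this inequality is already decisive. If $\mathfrak{l}$ is abelian then $d_1=d_2=0$, so $Z^2=\Lambda^2\mathfrak{l}^*$ and $B^2=0$, giving $\dim H^2=\binom{n}{2}=\tfrac12 n(n-1)$. Conversely, if $\dim H^2=\tfrac12 n(n-1)$ then the chain of inequalities above must be saturated, forcing $\dim[\mathfrak{l},\mathfrak{l}]=0$, i.e. $[\mathfrak{l},\mathfrak{l}]=0$, so that $\mathfrak{l}$ is abelian. No case analysis is needed here.

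For (i) and (ii) I would specialize to $\mathfrak{h}(m)$ with the basis $v_1,\dots,v_{2m},v$ of Definition \ref{Heisenberg}, dual basis $e_1,\dots,e_{2m},e_0$ with $e_0=v^*$, and write a general $2$-cochain as $\omega=\sum_{i<j}a_{ij}\,e_i\wedge e_j+\sum_i b_i\,e_i\wedge e_0$. The decisive structural observation is that every bracket lands in the centre $\langle v\rangle$ (indeed $[\mathfrak{h}(m),\mathfrak{h}(m)]=\langle v\rangle$), so on a triple $(v_i,v_j,v_k)$ of non-central basis vectors the cocycle condition only evaluates $\omega$ on pairs of the form $(v,v_k)$ and thus reduces to $c_{ij}b_k+c_{jk}b_i+c_{ki}b_j=0$, where the $c_{ij}$ are the structure constants $[v_i,v_j]=c_{ij}v$. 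Triples containing $v$ give no constraint, since $v$ is central and $\omega(v,v)=0$. In particular the coefficients $a_{ij}$ never appear in any cocycle relation, so they remain free and contribute $\binom{2m}{2}$ dimensions to $Z^2$. For $m=1$ there is no admissible triple, both $b_1,b_2$ stay free, $\dim Z^2=1+2=3$, and $\dim H^2=3-1=2$. For $m\ge2$, given any index $k$ one can pick a Heisenberg pair $\{2s-1,2s\}$ disjoint from $k$ (possible precisely because at least two pairs exist), and the triple $(2s-1,2s,k)$ forces $b_k=0$; hence all $b_i$ vanish, $\dim Z^2=\binom{2m}{2}=2m^2-m$, and $\dim H^2=(2m^2-m)-1=2m^2-m-1$.

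The hard part will be the linear-algebra bookkeeping in the last paragraph: one must check rigorously that the $a_{ij}$ genuinely decouple from all cocycle relations (a consequence of the brackets being central), and that for $m\ge2$ the relations coming from triples containing a Heisenberg pair disjoint from a given index are enough to annihilate every $b_i$ while still imposing nothing on the $a_{ij}$. Once this accounting is verified, the dimension counts in (i) and (ii) drop out immediately, and (iii) follows formally from the coboundary computation in the first paragraph.
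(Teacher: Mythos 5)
Your proposal is correct, and it takes a genuinely different route from the paper: the paper gives no proof of Theorem \ref{ab} at all, importing it verbatim from \cite{nr1} (Lemmas 2.4 and 2.5), where these dimensions rest on the Hopf-formula/free-presentation machinery and earlier structural results of Batten, Moneyhun and Stitzinger on multipliers of nilpotent Lie algebras. You instead compute the Chevalley--Eilenberg cohomology $H^2(\mathfrak{l},\mathbb{C})$ directly from the cochain complex, and the computation checks out: rank--nullity applied to $d_1$ (whose kernel is the annihilator of $[\mathfrak{l},\mathfrak{l}]$) gives $\dim B^2=\dim[\mathfrak{l},\mathfrak{l}]$, so for (iii) the equality $\dim H^2=\tfrac12 n(n-1)$ combined with $\dim Z^2\le\binom{n}{2}$ forces $[\mathfrak{l},\mathfrak{l}]=0$, while the abelian direction is immediate from $d_1=d_2=0$; and for $\mathfrak{h}(m)$ the structural point you flag as the ``hard part'' is in fact already fully justified by your own observation, since $[\mathfrak{h}(m),\mathfrak{h}(m)]=\langle v\rangle$ means every term $\omega([x,y],z)$ in the cocycle condition evaluates $\omega$ only on pairs of the form $(v,v_k)$, hence involves only the coefficients $b_k$ of $e_k\wedge e_0$ and never the $a_{ij}$ --- there is no residual coupling to check. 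The triple $(v_{2s-1},v_{2s},v_k)$ with $k\notin\{2s-1,2s\}$ then kills $b_k$ whenever $m\ge2$, and the nonexistence of such a triple for $m=1$ is exactly what makes $\mathfrak{h}(1)$ exceptional, yielding $3-1=2$ rather than the value of $(2m^2-m)-1$ at $m=1$; both counts agree with the cited lemmas. One notational remark: your reading of the paper's $H^2(\mathfrak{l},\mathbb{C}^*)$ as second cohomology with trivial coefficients $\mathbb{C}$ is the intended one (the $\mathbb{C}^*$ is inherited from the group-theoretic tradition, and the paper itself describes the multiplier as second cohomology with complex coefficients; dimension-wise $H_2(\mathfrak{l},\mathbb{C})\cong H^2(\mathfrak{l},\mathbb{C})^*$, so the homology/cohomology distinction is immaterial here). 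As for what each approach buys: yours is elementary and self-contained, makes the $m=1$ anomaly transparent, and could be spliced into the paper as an actual proof; the citation route buys the general apparatus (Hopf formula and the bounds of Theorem \ref{mt}) that the paper needs anyway for its concluding section, at the cost of leaving this particular statement unproved in situ.
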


More generally, the following result is true.

\begin{thm}[See \cite{nr2}, Theorem 3.1]\label{mt}
For an $n$--dimensional non--abelian nilpotent Lie algebra $\mathfrak{l}$ such
that $\mathrm{dim} (\mathfrak{l}/[\mathfrak{l},\mathfrak{l}])=d$, we have \[\mathrm{dim}  (H^2(\mathfrak{l},\mathbb{C}^*))\leq
\frac{1}{2}(2n-d-2)\ (d-1)\ +\ 1.\]Moreover, if $d=n-1$, then the
equality holds if and only if $\mathfrak{l} \cong \mathfrak{h}(1)\oplus \mathfrak{a}$, where $\mathfrak{a}$ denotes an abelian algebra of dimension $n-3$.
\end{thm}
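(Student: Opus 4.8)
The plan is to prove the inequality by induction on $m=\dim[\mathfrak{l},\mathfrak{l}]=n-d$, peeling off one central dimension at a time, and then to read off the equality statement as the base of this induction. Two standard tools from the homological theory of Lie-algebra Schur multipliers (as developed in \cite{nr1,nr2,nr3,knapp}) will drive the argument. The first is the Ganea-type five-term exact sequence attached to a one-dimensional central ideal $\mathfrak{z}$ of $\mathfrak{l}$,
\[ \mathfrak{z}\otimes \frac{\mathfrak{l}}{[\mathfrak{l},\mathfrak{l}]} \longrightarrow H^2(\mathfrak{l},\mathbb{C}^*)\longrightarrow H^2(\mathfrak{l}/\mathfrak{z},\mathbb{C}^*)\longrightarrow \mathfrak{z}\cap[\mathfrak{l},\mathfrak{l}]\longrightarrow 0, \]
in which the image of the left-hand map is exactly the kernel of the restriction $H^2(\mathfrak{l},\mathbb{C}^*)\to H^2(\mathfrak{l}/\mathfrak{z},\mathbb{C}^*)$. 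The second is the K\"unneth-type decomposition $H^2(\mathfrak{p}\oplus\mathfrak{q},\mathbb{C}^*)\cong H^2(\mathfrak{p},\mathbb{C}^*)\oplus H^2(\mathfrak{q},\mathbb{C}^*)\oplus\bigl(\mathfrak{p}/[\mathfrak{p},\mathfrak{p}]\otimes\mathfrak{q}/[\mathfrak{q},\mathfrak{q}]\bigr)$ for a direct sum of Lie algebras.

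For the inductive step I assume $m\ge 2$. Since $\mathfrak{l}$ is nilpotent and non-abelian, the last nonzero term of its lower central series lies in $[\mathfrak{l},\mathfrak{l}]\cap Z(\mathfrak{l})$, so I may choose a one-dimensional ideal $\mathfrak{z}\subseteq[\mathfrak{l},\mathfrak{l}]\cap Z(\mathfrak{l})$. Then $\mathfrak{z}\cap[\mathfrak{l},\mathfrak{l}]=\mathfrak{z}$ has dimension $1$, and counting dimensions along the exact sequence above gives $\dim H^2(\mathfrak{l},\mathbb{C}^*)\le \dim\bigl(\mathfrak{z}\otimes \mathfrak{l}/[\mathfrak{l},\mathfrak{l}]\bigr)+\dim H^2(\mathfrak{l}/\mathfrak{z},\mathbb{C}^*)-1=\dim H^2(\mathfrak{l}/\mathfrak{z},\mathbb{C}^*)+(d-1)$. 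The quotient $\bar{\mathfrak{l}}=\mathfrak{l}/\mathfrak{z}$ is again nilpotent and non-abelian with $\dim\bar{\mathfrak{l}}=n-1$ and $\dim[\bar{\mathfrak{l}},\bar{\mathfrak{l}}]=m-1$, so that $\dim(\bar{\mathfrak{l}}/[\bar{\mathfrak{l}},\bar{\mathfrak{l}}])=d$ is unchanged. Applying the inductive hypothesis to $\bar{\mathfrak{l}}$ and adding $d-1$ reproduces exactly the claimed bound, because the right-hand side $\tfrac12(2n-d-2)(d-1)+1$ increases by precisely $d-1$ when $n$ and $m$ both increase by $1$ at fixed $d$.

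The base case and the equality assertion coincide: both occur when $m=\dim[\mathfrak{l},\mathfrak{l}]=1$, i.e.\ $d=n-1$. Here $[\mathfrak{l},\mathfrak{l}]$ is one-dimensional and, by nilpotency, central, so the bracket descends to a single alternating form on $\mathfrak{l}/[\mathfrak{l},\mathfrak{l}]$; splitting off the radical of this form shows $\mathfrak{l}\cong\mathfrak{h}(k)\oplus\mathfrak{a}$ with $\mathfrak{a}$ abelian, $2k$ the rank of the form, and $\dim\mathfrak{a}=n-2k-1$. The K\"unneth formula then yields $\dim H^2(\mathfrak{l},\mathbb{C}^*)=\dim H^2(\mathfrak{h}(k),\mathbb{C}^*)+\binom{\dim\mathfrak{a}}{2}+2k\dim\mathfrak{a}$, and feeding in $\dim H^2(\mathfrak{h}(1),\mathbb{C}^*)=2$ and $\dim H^2(\mathfrak{h}(k),\mathbb{C}^*)=2k^2-k-1$ for $k\ge 2$ from Theorem \ref{ab} reduces the problem to a one-variable comparison in $k$ at fixed $n$. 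A direct computation shows this quantity equals $\binom{n-1}{2}+1$ exactly when $k=1$ and is strictly smaller for every $k\ge 2$, which is precisely the statement that equality holds if and only if $\mathfrak{l}\cong\mathfrak{h}(1)\oplus\mathfrak{a}$ with $\dim\mathfrak{a}=n-3$.

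The main obstacle is the correct formulation and invocation of the Ganea-type sequence together with the identification of the image of its left-hand map as $\ker\bigl(H^2(\mathfrak{l},\mathbb{C}^*)\to H^2(\mathfrak{l}/\mathfrak{z},\mathbb{C}^*)\bigr)$; this is what turns the five-term sequence into a quantitative reduction rather than a merely qualitative one, and it is the point on which the whole induction rests. A secondary, purely computational, difficulty is verifying that $k\mapsto \dim H^2(\mathfrak{h}(k),\mathbb{C}^*)+\binom{n-2k-1}{2}+2k(n-2k-1)$ attains its maximum over $\{1,2,\dots\}$ only at $k=1$, which is what secures the uniqueness in the equality case.
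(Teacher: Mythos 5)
The paper does not actually prove this statement: it is imported verbatim from \cite{nr2}, Theorem 3.1, and used as a black box. Your argument is correct and is essentially the proof given in that reference --- induction on $m=\dim[\mathfrak{l},\mathfrak{l}]$, using the Ganea-type exact sequence for a central line $\mathfrak{z}\subseteq[\mathfrak{l},\mathfrak{l}]\cap Z(\mathfrak{l})$ to obtain $\dim H^2(\mathfrak{l},\mathbb{C}^*)\leq \dim H^2(\mathfrak{l}/\mathfrak{z},\mathbb{C}^*)+d-1$, and settling the base case $m=1$ via the classification $\mathfrak{l}\cong\mathfrak{h}(k)\oplus\mathfrak{a}$ together with the K\"unneth-type formula (which is \cite[Theorem 2.2]{nr2}, also invoked elsewhere in this paper) and the values of $\dim H^2(\mathfrak{h}(k),\mathbb{C}^*)$ from Theorem \ref{ab}. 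Your final numerical check is also right: for $k\geq 2$ the quantity collapses to $\tfrac{1}{2}(n^2-3n)$ independently of $k$, exactly $2$ below the bound $\binom{n-1}{2}+1$ attained at $k=1$, which secures the equality characterization.
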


The following result may be proved either as an application of Theorem \ref{mt}, or via a direct computation. We will prefer this second approach.

\begin{thm}
The Schur multiplier of $\mathfrak{a}_{sh}$ has dimension $5$.
\end{thm}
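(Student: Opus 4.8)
The plan is to compute the second Lie--algebra cohomology $H^2(\mathfrak{a}_{sh},\mathbb{C})$ with trivial coefficients directly from the presentation \eqref{C2}, since over $\mathbb{C}$ the Schur multiplier has the same dimension as this group. I would work with the Chevalley--Eilenberg cochain complex $\mathbb{C}\to\mathfrak{a}_{sh}^*\xrightarrow{d_1}\Lambda^2\mathfrak{a}_{sh}^*\xrightarrow{d_2}\Lambda^3\mathfrak{a}_{sh}^*\to\cdots$, writing $e^1,\dots,e^4,e^v$ for the basis of $\mathfrak{a}_{sh}^*$ dual to $v_1,\dots,v_4,v$. The first observation is that every bracket in \eqref{C2} lands in the one--dimensional derived subalgebra $\langle v\rangle=[\mathfrak{a}_{sh},\mathfrak{a}_{sh}]$, which is \eqref{C3}. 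Consequently $d_1$ annihilates $e^1,\dots,e^4$ and sends $e^v$ to the single $2$--cochain dual to the bracket form, so $\mathrm{im}\,d_1$ is one--dimensional.

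Next I would compute $d_2$. Because all structure constants take values in $\langle v\rangle$, the differential $d_2$ kills every $2$--form built only from $e^1,\dots,e^4$ and acts nontrivially only on the mixed forms $e^1\wedge e^v,\dots,e^4\wedge e^v$, where it is governed by the constants $c_{ij}$ read off from \eqref{C2}. The decisive step is to decide which linear combinations of these four mixed forms are cocycles; this reduces to a small homogeneous linear system in the four coefficients, after which $\dim H^2=\dim\ker d_2-\dim\mathrm{im}\,d_1$.

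As an independent check I would use Theorem \ref{main1}(v), which identifies $\mathfrak{a}_{sh}\cong\mathfrak{l}_{5,2}\cong\mathfrak{h}(1)\oplus\mathfrak{i}\oplus\mathfrak{i}$. The Künneth formula for Lie--algebra homology then expresses the multiplier of $\mathfrak{a}_{sh}$ through those of the factors, assembling $H_2(\mathfrak{h}(1))$ of dimension $2$ by Theorem \ref{ab}(i), the multiplier of $\mathfrak{i}\oplus\mathfrak{i}$ furnished by Theorem \ref{ab}(iii), and the tensor product of the abelianizations $\mathfrak{h}(1)/[\mathfrak{h}(1),\mathfrak{h}(1)]$ and $\mathfrak{i}\oplus\mathfrak{i}$. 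Equivalently one can invoke the equality clause of Theorem \ref{mt} with $n=5$ and $d=\dim(\mathfrak{a}_{sh}/[\mathfrak{a}_{sh},\mathfrak{a}_{sh}])=4$, precisely because $\mathfrak{a}_{sh}$ has the shape $\mathfrak{h}(1)\oplus\mathfrak{a}$ with $\mathfrak{a}$ abelian of dimension $n-3=2$, so that the bound $\tfrac12(2n-d-2)(d-1)+1$ is attained.

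The obstacle I anticipate is bookkeeping rather than conceptual, but it is a genuine trap: one must enumerate the cocycles among the mixed forms without sign errors and, crucially, carry out the computation for the correct algebra. Both $\mathfrak{a}_{sh}$ and $\mathfrak{h}(2)$ are five--dimensional, nilpotent of class $2$, with one--dimensional derived subalgebra equal to the centre, as recorded in Theorem \ref{main1}(iii)--(iv); yet Theorem \ref{ab}(ii) assigns $\mathfrak{h}(2)$ a multiplier of dimension $2m^2-m-1=5$ at $m=2$, whereas the Künneth route above returns the multiplier of $\mathfrak{h}(1)\oplus\mathfrak{i}\oplus\mathfrak{i}$. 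Since $\mathfrak{a}_{sh}\not\cong\mathfrak{h}(2)$ by Theorem \ref{main1}(iv), the outcome of the direct count and that of the Künneth/Theorem \ref{mt} computation must be reconciled, and it is exactly here that the structural identification $\mathfrak{a}_{sh}\cong\mathfrak{h}(1)\oplus\mathfrak{i}\oplus\mathfrak{i}$ has to be used in place of the superficially indistinguishable $\mathfrak{h}(2)$; any conflation of the two is the likeliest source of a wrong dimension.
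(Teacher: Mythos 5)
Your fallback route---identify $\mathfrak{a}_{sh}\cong\mathfrak{l}_{5,2}\cong\mathfrak{h}(1)\oplus\mathfrak{a}$ with $\mathfrak{a}$ abelian of dimension $2$ and apply the direct-sum formula of \cite[Theorem 2.2]{nr2} together with Theorem \ref{ab}---is precisely the proof the paper gives; the Chevalley--Eilenberg computation is a genuinely different and more elementary route. But as written your proposal never evaluates anything: the ``small homogeneous linear system'' is not solved, the K\"unneth terms are not added up, and the bound in Theorem \ref{mt} is not computed. That is not a cosmetic omission, because when you do carry out the evaluation, every one of your three routes returns $7$, not $5$. The K\"unneth term is $\mathrm{dim}\bigl(\mathfrak{h}(1)/[\mathfrak{h}(1),\mathfrak{h}(1)]\otimes\mathfrak{a}\bigr)=2\cdot 2=4$, giving $2+1+4=7$; the equality clause of Theorem \ref{mt} at $n=5$, $d=4$ gives $\tfrac12(2\cdot5-4-2)(4-1)+1=7$; and in the cochain complex $d_2$ has rank $2$ on the mixed forms (since $d_2(e^3\wedge e^v)=-d_2(e^1\wedge e^v)$ and $d_2(e^4\wedge e^v)=-d_2(e^2\wedge e^v)$), so $\mathrm{dim}\ker d_2=10-2=8$ and $\mathrm{dim}H^2=8-1=7$. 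The paper reaches $5$ only through the step $\mathrm{dim}(\mathfrak{a}\otimes\mathfrak{a})=\mathrm{dim}\,\mathfrak{a}=2$, which is not the dimension of a tensor product of two $2$-dimensional spaces.

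So the concrete gap is this: your plan, executed correctly, proves that the Schur multiplier of $\mathfrak{a}_{sh}$ has dimension $7$ and therefore cannot conclude with the stated value $5$. You rightly flag the danger of conflating $\mathfrak{a}_{sh}$ with $\mathfrak{h}(2)$ (whose multiplier does have dimension $2m^2-m-1=5$ at $m=2$), but the real reconciliation you must perform is between your own three computations and the theorem's claim, and that reconciliation fails. Either finish the linear algebra and report the discrepancy explicitly, or exhibit an error in the evaluation above; as it stands the proposal cannot be completed into a proof of the statement.
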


\begin{proof} From Theorem \ref{main1} (v) we know that $\mathfrak{a}_{sh} \simeq \mathfrak{h}(1) \oplus \mathfrak{a}$, where $\mathfrak{a}$ is abelian of dimension 2. On the other hand,  \cite[Theorem 2.2]{nr2} implies that for a Lie algebra $\mathfrak{l}$, decomposed in the direct sum of two ideals $\mathfrak{b}$ and $\mathfrak{c}$, the following rule is true:
\[\mathrm{dim} \ (H^2( \mathfrak{b} \oplus \mathfrak{c}, \mathbb{C}^*)) = \mathrm{dim} \ (H^2(\mathfrak{b}, \mathbb{C}^*)) + \mathrm{dim}  \ (H^2(\mathfrak{c},\mathbb{C}^*)) + \mathrm{dim} \ \left(\frac{\mathfrak{b}}{[\mathfrak{b},\mathfrak{b}]} \otimes \mathfrak{c}\right).\]
This may be specialized to our case. Note that $\mathrm{dim}  \ (H^2(\mathfrak{h}(1),\mathbb{C}^*))=2$ by Theorem \ref{ab} and by the same theorem we also get  $\mathrm{dim}  \ (H^2(\mathfrak{a},\mathbb{C}^*))=1$. On the other hand,
\[\mathrm{dim}  \ \left(\frac{\mathfrak{h}(1)}{[\mathfrak{h}(1),\mathfrak{h}(1)]} \otimes \mathfrak{a}\right)= \mathrm{dim}  \ \left(\mathfrak{a} \otimes \mathfrak{a}\right) = \mathrm{dim}  \ \mathfrak{a} =2.\]
Then
\[\mathrm{dim} \ (H^2( \mathfrak{h}(1) \oplus \mathfrak{a}, \mathbb{C}^*))\]
\[ = \mathrm{dim} \ (H^2(\mathfrak{h}(1), \mathbb{C}^*)) + \mathrm{dim}  \ (H^2(\mathfrak{a},\mathbb{C}^*)) + \mathrm{dim} \ \left(\frac{\mathfrak{h}(1)}{[\mathfrak{h}(1),\mathfrak{h}(1)]} \otimes \mathfrak{a}\right)\]
\[=2 + 1 + 2 =5.\]
\end{proof}

The reason why we conclude our paper with the above results is due to the fact that we believe the pseudo-boson operators might be treated uniformly via Schur multipliers. In other words, it may happen that we have two dynamical systems, described by two different Lie algebras of pseudo-bosonic operators, but with the same structure of Schur multiplier.
For instance, we could deal with two nilpotent Lie algebras of dimension 5 as in Theorem \ref{classification}, but it might happen that they have the same Schur multiplier, so one could give an interpretation of the physical meaning of such circumstance. This might open a new line of research in the theory of pseudo-bosonic operators, involving sophistacted techniques of homology and category theory. On the other hand, the parallel problem would be to interpret the physical meaning of dynamical systems of this kind.

\section*{Acknowledgements}
F.B. acknowledges partial support from Palermo University and partial financial support from the Gruppo Nazionale di Fisica Matematica (GNFM) of the Istituto Nazionale di Alta Matematica (INdAM). F.B. also acknowledges partial financial support from the University of Cape Town, through the "Distinguished Visiting Professor" program of the Faculty of Science.

\end{document}